\newtheorem{observation}{Observation}
\newtheorem{invariant}{Invariant}
\newcommand{\calP}{{\ensuremath{\cal P}}}
\newcommand{\calX}{{\ensuremath{\cal X}}}
\newcommand{\calB}{{\ensuremath{\cal B}}}
\newcommand{\leaveout}[1]{}
\renewcommand{\paragraph}[1]{\medskip\noindent{\bf #1}}
\renewcommand{\medskip}{\smallskip}
\renewcommand{\int}{{\ensuremath{\rm int\,}}}
\date{}
\title{
Triangulating planar graphs \\ while keeping the pathwidth small
\thanks{
Research was supported by NSERC and done while visiting
Universit\"at Salzburg.    
Many thanks to Jasine Babu for 
sharing her manuscript of what later became \cite{BBC+14}, and
the referees of an earlier version of this paper for helpful comments.
}
}
\author{Therese Biedl}
\institute{David R.~Cheriton School of Computer
Science, University of Waterloo, \\ Waterloo, Ontario N2L 1A2, Canada.
\email{biedl@uwaterloo.ca}}
\begin{document}

\maketitle
\begin{abstract}
Any simple planar graph can be triangulated,
i.e., we can add edges to it, without adding multi-edges, such
that the result is planar and all faces are triangles.
In this paper, we study the problem of triangulating a planar
graph without increasing the pathwidth by much.  

We show that if a planar graph has pathwidth $k$, then we can triangulate
it so that the resulting graph has pathwidth $O(k)$ (where the factors
are 1, 8 and 16 for 3-connected, 2-connected and arbitrary graphs).
With similar techniques, we also show that any outer-planar
graph of pathwidth $k$ can be turned into a maximal outer-planar
graph of pathwidth at most $4k+4$.  The previously best known
result here was $16k+15$.
\end{abstract}

\section{Introduction}

Let $G=(V,E)$ be an undirected simple graph that is  {\em planar}, i.e., it
has a crossing-free drawing in the plane.  
$G$ is called {\em triangulated} if all maximal regions not containing the
drawing are incident to three edges of $G$.  (More detailed definitions
will be given in Section~\ref{se:definition}.)
Any planar simple graph with $n\geq 3$ vertices can be triangulated
by adding edges without destroying planarity.

In this paper, we study the problem of triangulating a planar graph $G$
such that the pathwidth of the resulting graph is proportional
to the pathwidth of $G$.  Here, the {\em pathwidth} $pw(G)$ of a graph $G$ is
a well-known graph parameter (defined formally in Section~\ref{se:definitions}).
Graphs of small pathwidth have many applications.  Many graph problems
can be solved in polynomial  time if the pathwidth is constant. 
(See e.g.~\cite{Bod97}.)  The pathwidth also serves as
lower bound on the height of planar graph drawings \cite{FLW03}.  Vice versa,
some planar graphs $G$ can
be drawn with height $O(pw(G))$, notably trees \cite{Sud04}
and 2-connected outer-planar graphs \cite{Bie-WAOA12}.

The latter paper raised the question whether any outer-planar graph
can be made 2-connected by adding edges without increasing the
pathwidth much.  (For if so, then {\em all} outer-planar graphs
can be drawn with height $O(pw(G))$.)  This question was answered
in the affirmative by Babu et al.~\cite{BBC+14}, who showed that
any outer-planar graph $G$ can be made into a 2-connected outer-planar
graph $G'$ with $pw(G')\leq 16pw(G)+15$.

\paragraph{Our results: }
In this paper, we improve on the result by Babu et al.~and show
that we can add edges to any outer-planar graph $G$ such that the
result is a 2-connected outer-planar
graph $G'$ with $pw(G')\leq 4pw(G)+1$.  But our technique is much more
general.  Rather than working with outer-planar graphs, we
prove that any planar 2-connected graph can be triangulated without 
increasing the pathwidth if we allow multi-edges.  We can also
remove multi-edges; this increases the pathwidth at most 8-fold.
With much the same technique we can also handle graphs with cut-vertices
and make them 2-connected while increasing the pathwidth (roughly) 16-fold.
Outer-planar graphs can be handled as special cases and give an even
smaller increase in the pathwidth.

\paragraph{Related results: }  Many papers have dealt with
how to triangulate a planar graph under some additional constraint.
For example, any 2-connected planar graph can be 
triangulated so that the result is 4-connected (except for
wheel-graphs) \cite{BKK97}.  Any $k$-outer-planar graph can be
triangulated so that the result is $(k+1)$-outer-planar \cite{Bie15}.
Any planar graph $G$ with treewidth $tw(G)$ can be triangulated so that
the result has treewidth $\max\{3,tw(G)\}$ \cite{BR13}.
Triangulating planar graphs has also been studied while 
minimizing the maximum degree \cite{KB92}, and relates to planar graph
connectivity-augmentation problems (see e.g.~\cite{GMZ09} and the
references therein) since any triangulated graph 
is 3-connected.

\section{Background}
\label{se:definition}
\label{se:definitions}

Let $G=(V,E)$ be a graph with at least 3 vertices.
$G$ is called {\em planar} if it can be drawn
without crossing in the plane.
A crossing-free drawing $\Gamma$ of $G$ 
defines a cyclic order of edges at a vertex $v$
by enumerating them in clockwise order around $v$;
we call such a set of orders
a {\em planar embedding} of $G$.  The maximal
regions of $\mathbb{R}^2-\Gamma$ are called {\em faces} of the drawing; they
can be read from the planar embedding by computing the {\em
facial circuit}, i.e., the order of vertices and edges encountered while
walking around the face in clockwise order.
A graph $G$ is called {\em outer-planar} if $G\cup \{z^*\}$ is planar,
where $z^*$ is a newly-added {\em universal vertex} adjacent to all
vertices of $G$.

A {\em loop} is an edge $(v,v)$ for some vertex.  A {\em multi-edge}
is an edge $(v,w)$ with {\em multiplicity} $\mu \geq 2$, i.e., there
exist $\mu$ copies of $(v,w)$.  A graph is called {\em simple} if
it has neither loops nor multi-edges.  All input graphs in this
paper are required to be simple, but we sometimes add 
multi-edges in intermediate steps.  (We never add loops.)  A {\em multi-graph}
is a graph without loops (but possibly with multi-edges).  The {\em underlying
simple graph} of a multi-graph is obtained by deleting all but one copy
of each multi-edge.

\paragraph{Connectivity: }  A multi-graph $G$ is called
{\em connected} if we can go from any vertex $v$ to any vertex $w$
while walking along edges of $G$.  The {\em connected components}
of a multi-graph are the maximal subgraphs that are connected.
A multi-graph $G$ is called {\em $k$-connected}
if it remains connected even after deleting $k-1$ arbitrary vertices.
If $G$ is connected but not 2-connected, then $G$ has a {\em cut-vertex},
i.e., a vertex $v$ such that $G-v$ is not connected.  
A graph that is 2-connected, but not 3-connected, has a {\em cutting pair},
i.e., a pair of vertices $v,w$ such that $G-\{v,w\}$ is not connected.  

If $S$ is a set of vertices, then let $C'_1,\dots,C'_L$ be the connected
components of $G-S$ ($L=1$ if $S$ was not a cut-set). Define
for $i=1,\dots,L$ the {\em cut-component} $C_i$ of $S$ to consist of $C'_i$,
the edges from $C'_i$ to $S$, and a complete graph added between the
vertices of $S$.  
Define $\int C_i := C_i' = C_i-S$ to be the {\em interior} of $C_i$.

\paragraph{Triangulating: }
A face (in a planar graph in some planar embedding)
is called a {\em triangle} if its facial circuit contains three
edges.  A multi-graph $G$ is called {\em multi-triangulated} if 
it has a planar embedding such that all faces of $G$ are triangles.
Such a graph may well have multi-edges, but duplicate copies of an
edge must use different routes (no facial circuit
may consist of two copies of the same edge). 
A graph $G$ is called {\em triangulated} if it is multi-triangulated and simple.
A triangulated graph is 3-connected (and hence has a unique planar
embedding, up to reversal of all edge orders).  A multi-triangulated
graph $G$ need not be 3-connected, but it is 2-connected since $n\geq 3$
and $G$ has no loops.  One can show (see
\cite{Bie-WG15TR}) that the cutting pairs of $G$
correspond to multi-edges
as follows:  $\{u,v\}$ is a cutting pair that has $L$ cut-components
if and only if $(u,v)$ is a multi-edge with multiplicity $L$.
Further, $G$ has at least one edge that is not a multi-edge.

The idea of {\em triangulating} is to add edges to a graph until it is
triangulated.  More formally, {\em multi-triangulating a planar multi-graph $G$} 
means adding edges to $G$ so that the result is multi-triangulated.  
{\em Triangulating} a planar multi-graph $G$ means to add edges to
the underlying simple graph of $G$ such that the result is triangulated.
In particular, this operation is allowed to delete copies of a multi-edge
from $G$.

\medskip\noindent{\bf Pathwidth: }
Let $G$ be a multi-graph. 
Let $X_1,\dots,X_N$ be sets of vertices of $G$; we call these {\em bags}.  
We say that $X_1,\dots,X_N$ is a {\em path decomposition} $\calP$ of $G$ if
\begin{itemize}
\item every vertex appears in at least one bag,
\item for every edge $(u,v)$ in $G$, at least one bag $X_i$ contains
	both $u$ and $v$, and
\item for every vertex $v$ in $G$, the bags containing $v$
	form an interval.  Put differently, if $v\in X_{i_1}$ and
	$v\in X_{i_2}$ then also $v\in X_i$ for all $i_1<i<i_2$.
\end{itemize}
Bags naturally imply an order;
we write $X_i \preceq X_j$ if $i\leq j$ and $X_i \prec X_j$ if $i<j$.
The {\em bag-size} of such a path decomposition is $\max |X_i|$.
The {\em width} of such a path decomposition is $\max |X_i|-1$.
A graph is
said to have {\em pathwidth} at most $k$ if it has a path decomposition
of width $k$.

\section{3-connected graphs}
\label{se:connected}
\label{se:triangulating}
\label{se:triangulate}

We first show how to multi-triangulate 2-connected graphs (which
also triangulates 3-connected graphs).

\begin{lemma}
\label{lem:2connectedToTriangulatedMulti}
\label{lem:2conn_multi}
Let $G$ be a planar 2-connected multi-graph with a planar embedding 
for which any facial circuit has
at least 3 edges.  Then we can multi-triangulate $G$ without
increasing the pathwidth and without changing the planar embedding.
\end{lemma}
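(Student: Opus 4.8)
The plan is to leave the given path decomposition untouched and to add only chords that it already ``covers.'' Fix a path decomposition $\calP=X_1,\dots,X_N$ of $G$ of width $pw(G)$, and for a vertex $v$ let $[l(v),r(v)]$ be the interval of indices of bags containing $v$ (a genuine interval, by the third path-decomposition axiom). Call a candidate edge $(u,v)$ \emph{legal} if $[l(u),r(u)]\cap[l(v),r(v)]\neq\emptyset$, i.e.\ some bag holds both $u$ and $v$; since endpoints are integers this means $\max\{l(u),l(v)\}\le\min\{r(u),r(v)\}$. The point of this definition is that if \emph{every} added edge is legal, then $\calP$ is still a valid path decomposition of the resulting graph, so the width does not grow and the pathwidth does not increase. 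Hence it suffices to multi-triangulate $G$ using only legal chords, keeping the embedding fixed.

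I would next reduce to triangulating each face in isolation. Since $G$ is $2$-connected and every facial circuit has at least $3$ edges, each face is a simple cycle of length $\ge 3$ (no cut vertex forces a repeated vertex, and the length bound rules out bigons). Chords added strictly inside one face do not cross those inside another, so the embedding is preserved and the separate triangulations are mutually consistent; it remains to triangulate one face $C=v_0v_1\cdots v_{d-1}$ with $d\ge 4$ using legal chords. For this I propose an \emph{ear-cutting} induction on $d$. Choose the vertex $w=v_m$ of $C$ minimizing $r(w)$, and let $u=v_{m-1}$ and $x=v_{m+1}$ be its two neighbors on $C$; these are distinct vertices because $C$ is a simple cycle of length $\ge 4$. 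The edges $(u,w)$ and $(w,x)$ are legal, so $[l(u),r(u)]$ and $[l(x),r(x)]$ each meet $[l(w),r(w)]$. By minimality of $r(w)$ we have $r(w)\le r(u)$ and $r(w)\le r(x)$, and combining this with the overlaps gives $l(u)\le r(w)$ and $l(x)\le r(w)$; thus the integer $r(w)$ lies in both $[l(u),r(u)]$ and $[l(x),r(x)]$, so $(u,x)$ is legal. Adding $(u,x)$ cuts off the ear triangle $uwx$ (whose three edges $uw$, $wx$, $ux$ are distinct routes, hence a genuine triangle), and one recurses on the shorter cycle $C'=v_0\cdots v_{m-1}v_{m+1}\cdots v_{d-1}$ with $(u,x)$ as a new edge, for which the same argument applies.

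The bookkeeping to verify is the multi-graph side of the definition. The added chord $(u,x)$ may coincide with an already-present edge, but this only creates a multi-edge, which is explicitly allowed; what must be checked is that no added triangle is bounded by two copies of a single edge, which holds because $u,w,x$ are three distinct vertices and $(u,x)$ runs strictly inside $C$ (a different route from any existing $(u,x)$). Terminating the ear-cutting at triangles then yields an embedding all of whose faces are triangles, i.e.\ a multi-triangulation, using only legal chords. The step I expect to be the main obstacle is precisely this last point: arguing carefully that the construction never produces a degenerate face (a bigon) and that the output meets the multi-triangulated definition in full generality — the interval inequality itself, namely $r(w)\in[l(u),r(u)]\cap[l(x),r(x)]$, is short once the ``minimum $r$'' vertex is chosen, but confirming that the resulting embedding is a legitimate multi-triangulation (rather than merely a collection of triangular regions) requires handling the multi-edge route distinctions and the fixed outer face with some care.
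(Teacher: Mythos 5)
Your proof is correct and is essentially the paper's argument: both add only chords that are already covered by a common bag of the fixed path decomposition, so the decomposition remains valid and the width is unchanged. The only difference is that where the paper cites chordality of the interval graph $G^+$ induced by the bags to obtain a chord of every facial cycle of length at least $4$, you re-derive that fact explicitly via the minimum-$r(w)$ ear-cutting argument.
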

\begin{proof} 
\footnote{Babu et al.~published a similar proof in an early version of
\cite{BBC+14}, but omitted it in \cite{BBC+14}.}
Fix a path decomposition $\calP$ of $G$ that has width $pw(G)$.
Let $G^+$ be the graph induced by $\calP$, i.e., $G^+$
has the same vertices as $G$, but an edge $(v,w)$ for {\em any}
pair of vertices that occur in a common bag.  By properties of a
path decomposition
$G^+$ is an interval-graph, therefore chordal, therefore any simple cycle $C$
of length $\geq 4$ has a {\em chord} (an edge between two
non-consecutive vertices of $C$).  See Golumbic~\cite{Gol80} for details
of these concepts.  

Let $f$ be any facial circuit of $G$ with 4 or more edges on it. 
By 2-connectivity $f$ is a simple cycle, and hence $G^+$ contains a
chord of $C$.  Add this chord to $G$, routing it inside $f$.  The
resulting graph is still planar and 2-connected and all facial circuits
have at least 3 edges,
so repeat until $G$ is multi-triangulated.
\qed
\end{proof}

Our problem was motivated by planar graph drawing applications, where
often one starts by triangulating the planar graph (or adding edges
to the outer-planar graph to make it maximal outer-planar).  For
these applications, multi-edges are a problem. For example usually
one triangulates so that one can use the canonical
ordering \cite{FPP90} or a Schnyder wood \cite{Sch90}, and these only
exist for {\em simple} triangulated planar graphs.  
Hence
one wonders whether the same lemma holds without allowing multi-edges.
Thus, given a planar 2-connected graph, can we triangulate it without
increasing the pathwidth?  This turns out to be false.  Consider
a 4-cycle, which has pathwidth 2.  The only way to triangulate a
4-cycle without multi-edges is to turn it in $K_4$, which has
pathwidth 3.  However, if $G$ was already 3-connected, then no
multi-edges will happen.

\begin{corollary}
\label{cor:3connectedToTriangulatedSimple}
\label{cor:3conn}
Let $G$ be a 3-connected simple planar graph with $n\geq 3$.  
Then we can triangulate $G$ without increasing the pathwidth.
\end{corollary}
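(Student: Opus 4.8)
The plan is to reduce the corollary directly to Lemma~\ref{lem:2conn_multi}. Since $G$ is $3$-connected it is in particular $2$-connected, and because $G$ is simple with $n\geq 3$, in its (unique) planar embedding every face is bounded by a simple cycle of length at least $3$; thus the hypotheses of Lemma~\ref{lem:2conn_multi} are satisfied. First I would apply the lemma to obtain a multi-triangulated graph $G'$ on the same vertex set as $G$, with the same planar embedding, and with the pathwidth not increased.

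The only thing that remains is to argue that $G'$ is in fact \emph{simple}, so that ``multi-triangulated'' upgrades to ``triangulated'' and the corollary follows. Here I would use two facts. First, the lemma only \emph{adds} edges, so $G\subseteq G'$ on the same vertex set; for any pair $\{u,v\}$ the graph $G'-\{u,v\}$ is a supergraph (on the same vertices) of the connected graph $G-\{u,v\}$, and hence $G'$ inherits $3$-connectivity from $G$. Second, the excerpt records the correspondence that in a multi-triangulated graph a pair $\{u,v\}$ is a cutting pair with $L$ cut-components if and only if $(u,v)$ is a multi-edge of multiplicity $L$. As $G'$ is $3$-connected it has no cutting pair, so by this correspondence it has no multi-edge; being multi-triangulated and simple, $G'$ is therefore triangulated. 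Combining, $G'$ is a triangulation of $G$ whose pathwidth is not larger than that of $G$, which is exactly what is claimed.

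I expect the main (and essentially the only) subtlety to be the claim that no multi-edge is ever created. Rather than tracking the individual chords added in the proof of Lemma~\ref{lem:2conn_multi} and verifying that each one joins a currently non-adjacent pair, I prefer the global argument above: monotonicity of $3$-connectivity under edge addition, together with the cutting-pair/multi-edge correspondence, disposes of all potential multi-edges simultaneously. The one point that must be stated carefully is that $3$-connectivity of $G$ genuinely passes to $G'$, but this is immediate since the construction never deletes a vertex or an edge.
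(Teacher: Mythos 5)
Your proposal is correct and matches the paper's own proof: both apply Lemma~\ref{lem:2conn_multi} (noting that simplicity guarantees all faces have at least $3$ edges), then use the fact that adding edges cannot decrease connectivity together with the cutting-pair/multi-edge correspondence to conclude the result is simple. No gaps.
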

\begin{proof} 
Since $G$ is simple, any face has at least 3 edges.
Apply the previous lemma to get $G'$.  Adding edges cannot
decrease connectivity, so $G'$ has no cutting pairs.  
Since multi-edges in multi-triangulated graphs correspond to
cutting pairs, hence $G'$ is simple.
\qed
\end{proof}

\section{2-connected graphs}
\label{se:multi_edges}

We already know how to multi-triangulate 2-connected planar graphs
with Lemma~\ref{lem:2connectedToTriangulatedMulti}.
The hard part, done in this section, is how to convert such a
multi-triangulated graph into a triangulated one (i.e., remove
the multi-edges and replace them with others)
without increasing the pathwidth much.  
We state the required increase in terms of another parameter, $c$,
because this will help to obtain a smaller bound for outer-planar
graphs later.

\begin{lemma}
\label{lem:multi-edge}
\label{le:multi-edge}
\label{lem:multi}
Any multi-triangulated graph $G$ can be triangulated, after possibly
changing the planar embedding, such that the
resulting graph $G'$ has pathwidth $pw(G')\leq 2pw(G)+1+2c$.

Here $c$ is the maximum number of cutting pairs 
that can exist in one bag, i.e., for any path decomposition
$\calP$ of width $pw(G)$ and any bag $X_i$ of $\calP$ there are 
at most $c$ cutting pairs $\{u_1,v_1\},\dots, \{u_c,v_c\}$ 
such that $\{u_1,v_1,\dots,u_c,v_c\}\subseteq X_i$.
\end{lemma}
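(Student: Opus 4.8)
The plan is to remove the multi-edges one unit of multiplicity at a time by a purely local surgery, and then to rebuild a path decomposition that absorbs the few new edges this creates. Recall that a multi-edge $(u,v)$ of multiplicity $L$ is exactly a cutting pair $\{u,v\}$ whose $L$ cut-components $C_1,\dots,C_L$ occupy, in the current embedding, the $L$ regions cut out by the $L$ parallel copies of $(u,v)$, with consecutive copies bounding a common region. First I would fix a path decomposition $\calP$ of width $pw(G)$ and process the cutting pairs. For a single pair $\{u,v\}$ I keep one copy of $(u,v)$ and delete the other $L-1$ copies. Deleting the copy separating $C_i$ from $C_{i+1}$ merges the two incident triangles $u,v,a_i$ and $u,v,b_{i+1}$ (with $a_i\in\int C_i$, $b_{i+1}\in\int C_{i+1}$) into a quadrilateral, which I must re-triangulate; since the diagonal $(u,v)$ is forbidden, the only choice is the diagonal $(a_i,b_{i+1})$. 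Thus each pair contributes $L-1$ new edges that chain its cut-components together. After doing this for every multi-edge the graph is planar with all faces triangles, so it suffices to argue that it is also simple; equivalently, by the correspondence quoted above, that it is $3$-connected, i.e.\ that none of the deletions leaves, and none of the new chords creates, a cutting pair.

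To keep the pathwidth small I would use the freedom to change the planar embedding: around each pair $\{u,v\}$ the cyclic order of $C_1,\dots,C_L$ can be permuted arbitrarily, so I order them by the position of their interiors in $\calP$, i.e.\ so that the bag-intervals covering $\int C_1,\int C_2,\dots$ appear left-to-right. With this ordering each forced chord $(a_i,b_{i+1})$ joins two cut-components that are adjacent in $\calP$, so its endpoints sit in consecutive portions of the decomposition. I then build $\calP'$ from $\calP$ by taking, for each bag, the union with one neighbouring bag --- this is what doubles the bag-size and yields the leading term $2pw(G)$ --- and by additionally carrying, for each cutting pair currently ``in transition'', the two apex vertices needed to realize its chord. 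At any single bag at most $c$ cutting pairs can be active (their endpoints all lie in one bag of $\calP$, which is the definition of $c$), contributing the additive $2c$; a single further slot absorbs the last $+1$. Verifying the three path-decomposition axioms for $\calP'$ --- in particular that every chord $(a_i,b_{i+1})$ lands in some bag and that each vertex still occupies a contiguous interval --- is then a bookkeeping argument.

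The main obstacle is precisely this bag construction, for two reasons. First, the chords are forced: I cannot choose $a_i$ or $b_{i+1}$, since they are dictated by the triangles bordering the deleted copy, so I must genuinely show that the embedding can be arranged so that these apex vertices fall near the $C_i$--$C_{i+1}$ boundary in $\calP$, rather than argue about conveniently chosen vertices. Second, many cutting pairs are active simultaneously and their cut-component ranges nest and interleave, so the ``in transition'' vertices of different pairs must be accommodated at once without their intervals breaking contiguity; this is what forces the explicit $+2c$ term and is the step I expect to require the most care. By contrast, the simplicity and $3$-connectivity check of the surgery should be a comparatively routine planarity argument, since deleting all-but-one parallel copies while re-triangulating each merged quadrilateral cannot separate the remaining graph.
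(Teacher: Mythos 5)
Your surgery itself is sound and essentially matches the paper's: Observation~\ref{obs:planar} is exactly your statement that deleting a copy of $(u,v)$ and re-triangulating the resulting quadrilateral yields a chord between apex vertices of two distinct cut-components, and your first ``obstacle'' (that the apexes are forced) is in fact resolvable --- each cut-component has exactly two candidate apexes (its peripheral pair $\{b_i^\ell,b_i^r\}$), and by permuting \emph{and reversing} cut-components you may realize any chord $(b_i^\alpha,b_j^\beta)$ you like. The simplicity claim is also fine. The genuine gap is in the pathwidth accounting, precisely the step you defer as ``bookkeeping''. Your budget has two sources: doubling each bag by taking its union with a neighbouring bag, and $+2c$ for carrying two apexes per cutting pair active in a bag. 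The union-with-a-neighbour trick only covers a chord whose two endpoints already lie in \emph{adjacent} bags of $\calP$; it does not give you a free pool of $|X_i|$ extra slots with which to stretch an arbitrary vertex across many bags. And the $2c$ term gives you only two carried vertices per cutting pair per bag, which suffices when the bag-intervals of the cut-components of $\{u,v\}$ are pairwise disjoint (then at most one apex per pair is ``in transit'' at a time, plus one more for reattaching to the remaining component). But the bag-intervals of the cut-components of a single cutting pair can nest or interleave arbitrarily: with $L$ nested components you must keep on the order of $L$ stretched vertices alive in the same bag while chaining them, and $L$ is the multiplicity of the multi-edge, which is not bounded by $c$ or by anything your budget controls.

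The paper closes exactly this hole with machinery your proposal lacks: it fixes a parent/child hierarchy among cut-components, proves (Lemma~\ref{lem:paths}) that all child-peripheral pairs can be joined by \emph{vertex-disjoint} paths in $G$, and redistributes one vertex-token per vertex of each such path to the corresponding peripheral pair. This guarantees (Invariant~\ref{inv}) that every bag in the bag-interval of a child-peripheral pair holds a spare token for that pair, and the disjointness of the paths is what caps the total number of such spare tokens per bag at $|X_i|$ --- that is where the factor $2$ actually comes from. The overlapping and nested cases are then paid for by these peripheral-pair tokens, and only the pairwise-disjoint case is paid for by the $2c$ cutting-pair tokens. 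Without Lemma~\ref{lem:paths} (or an equivalent charging scheme) your construction does not yield the claimed bound.
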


The rest of this section is devoted to the proof of this lemma.
We first give an outline of the proof.  We add $|X_i|+2c$ ``tokens'' to 
each bag $X_i$ of $\calP$; these are place-holders for vertices that need to be
added to bags later when adding edges.  These tokens are then
redistributed so that in each bag $X_i$ we have 2 tokens per cutting pair
$\{u,v\}\subseteq X_i$, and one token for each cut-component of $\{u,v\}$
that ``intersects'' $X_i$ in some sense.  We then can read from the
path-decomposition how to re-arrange the planar embedding such that
we can replace a copy of a multi-edge by a new edge in such a way that
we use up only ``few'' tokens. In particular, the above invariant on
what tokens exist in bags continues to hold.  Repeating this until no multi-edges
are left  then gives the desired graph $G'$.  Since we had $|X_i|+2c$
tokens, the new bag-size is at most $2|X_i|+2c$, and hence
$pw(G')\leq (2(pw(G)+1)+2c)-1 = 2pw(G)+1+2c$.

For the detailed proof, fix one planar embedding
of $G$ such that all faces are triangles. (We later
change this embedding, but all faces will continue to be triangles.)  
Fix one path decomposition $\calP$ of $G$ of width $pw(G)$.

\paragraph{Assigning tokens: }
We assign tokens to a bag $X_i$ of $\calP$ as follows:
(1) Add one token to $X_i$ for each vertex $v$ in $X_i$; 
	this is the {\em vertex-token} of $v$.
(2) Add two tokens to $X_i$ for every cutting pair $\{u,v\}$ 
	with $\{u,v\}\subseteq X_i$; these are the
	{\em cutting-pair tokens}, or the {\em tokens of $\{u,v\}$}.

\paragraph{Peripheral pairs: }
Let $\{u,v\}$ be a cutting pair, and let $C_0,\dots,C_L$ be its
cut components.    One can show \cite{Bie-WG15TR}
that for $i\in\{0,\dots,L\}$
the edges from $v$ to $\int C_i$ occur 
consecutively in the clockwise order of edges around $v$,
surrounded by two copies of edge $(u,v)$.
See Figure~\ref{fig:peripheral} for an illustration.
Let $b_i^\ell$ and $b_i^r$ be the first and last neighbor of $v$
within this interval of edges to $\int(C_i)$.
We call $\{b_i^\ell,b_i^r\}$
the {\em peripheral pair of cut-component $C_i$}.
Notice that $b_i^\ell=b_i^r$ if $\deg(b_i^\ell)=2$ or
$(b_i^\ell,v)$ is a multi-edge,
but we use the term ``pair'' even then for ease of wording.

\vspace*{-5mm}
\begin{figure}[ht]
\hspace*{\fill}
\includegraphics[width=0.47\linewidth,page=3]{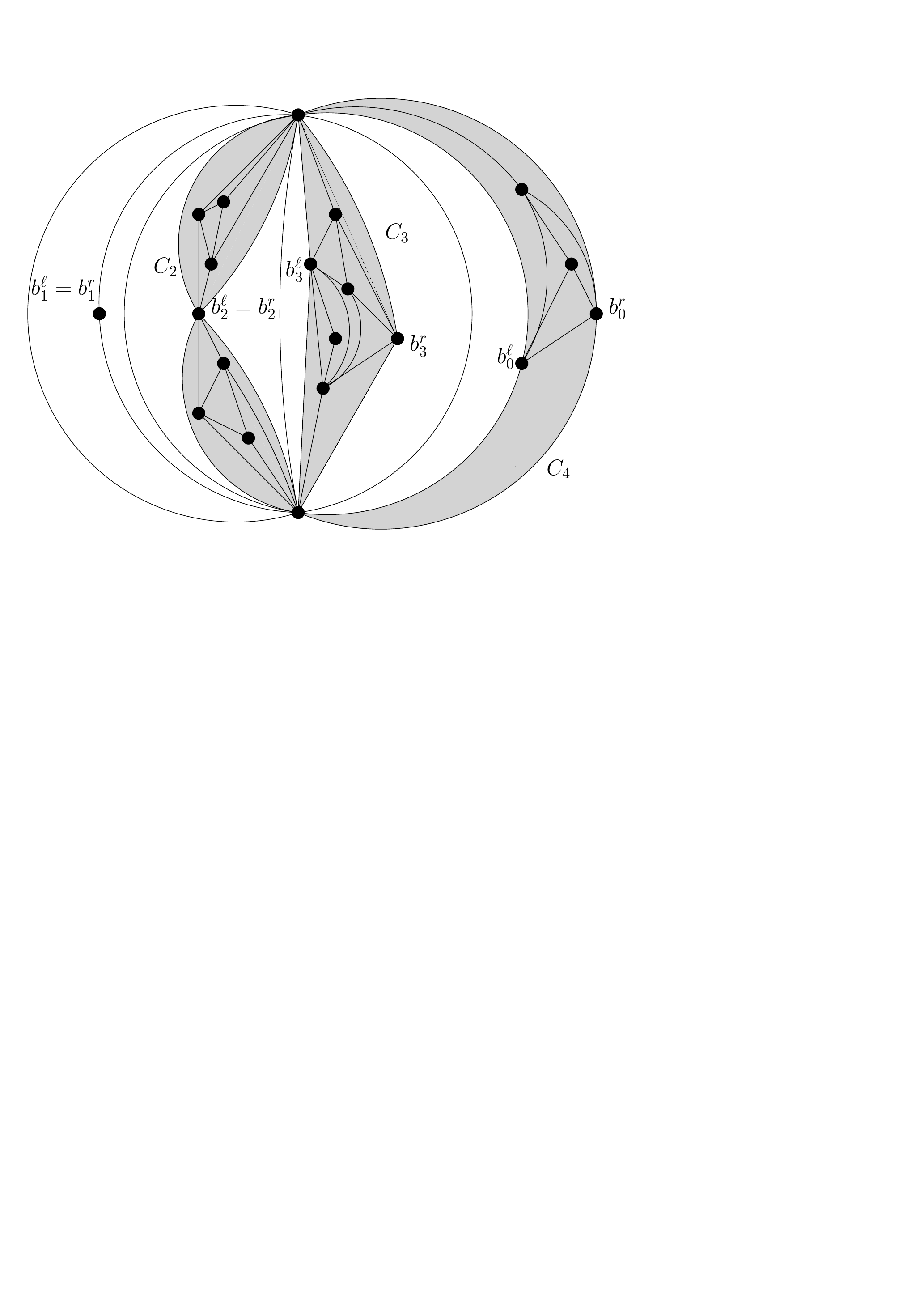}
\hspace*{\fill}
\includegraphics[width=0.47\linewidth,page=2]{peripheral_pair.pdf}
\hspace*{\fill}
\caption{A multi-triangulated graph with a cutting pair $\{u,v\}$
that has four cut-components. Dotted red lines are paths
assigned to peripheral pairs as in Lemma~\ref{lem:paths}.  
We can add edge $(b_1^r,b_3^r)$ 
if we swap $C_2$ and $C_3$ and reverse $C_3$.}
\label{fig:peripheral}
\end{figure}

\vspace*{-5mm}
\begin{observation}
\label{obs:planarity}
\label{obs:planar}
Let $G$ be a multi-triangulated graph that has a cutting pair $\{u,v\}$.
Let $C_i$ and $C_j$ be two different cut-components of 
$\{u,v\}$.  For any choice of $\alpha,\beta\in \{\ell,r\}$, 
deleting one copy of $(u,v)$ and adding
$(b_i^\alpha,b_j^\beta)$ results in a multi-triangulated graph
(after possibly changing the planar embedding).
\end{observation}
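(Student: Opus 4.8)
The plan is to exploit the freedom we have in re-embedding a $2$-connected planar multi-graph at a separation pair. Since $\{u,v\}$ is a cutting pair, its cut-components $C_0,\dots,C_L$ appear as consecutive ``flaps'' in the rotation around $v$ (and around $u$), each flap separated from the next by a copy of the multi-edge $(u,v)$, exactly as described in the paragraph on peripheral pairs and depicted in Figure~\ref{fig:peripheral}. The two embedding operations I would rely on are: (i) permuting the flaps arbitrarily in this cyclic order, and (ii) reflecting any individual flap $C_k$, which exchanges the roles of $b_k^\ell$ and $b_k^r$. Both operations keep the drawing planar and keep every face a triangle, since they only reroute the copies of $(u,v)$ and rigidly move the interiors $\interior C_k$; this is the structural fact about separation pairs cited to \cite{Bie-WG15TR}.

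First I would bring $C_i$ and $C_j$ into adjacent positions in the cyclic order, so that they share a single copy $e$ of $(u,v)$, with $C_i$ on the counterclockwise side of $e$ and $C_j$ on the clockwise side. Next, using operation (ii), I would orient the two flaps so that the neighbor of $v$ incident to $e$ on the $C_i$-side is exactly $b_i^\alpha$ and the neighbor of $v$ incident to $e$ on the $C_j$-side is exactly $b_j^\beta$; this is possible for every choice of $\alpha,\beta\in\{\ell,r\}$ because reflecting $C_i$ (resp.\ $C_j$) swaps $b_i^\ell\leftrightarrow b_i^r$ (resp.\ $b_j^\ell\leftrightarrow b_j^r$). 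When $b_k^\ell=b_k^r$ the orientation is immaterial and the construction is unaffected.

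The key local observation is then that, because $G$ is multi-triangulated, the two faces incident to $e$ are the triangles $(u,v,b_i^\alpha)$ and $(u,v,b_j^\beta)$; in particular $b_i^\alpha$ and $b_j^\beta$ are both adjacent to $u$. Deleting $e$ merges these two triangles into a single quadrilateral face with boundary $u,b_i^\alpha,v,b_j^\beta$, and adding the diagonal $(b_i^\alpha,b_j^\beta)$ splits it back into the two triangles $(u,b_i^\alpha,b_j^\beta)$ and $(v,b_i^\alpha,b_j^\beta)$. Since $C_i\neq C_j$ their interiors are disjoint, so $b_i^\alpha\neq b_j^\beta$ and no loop is created; every other face is untouched, so the result is again multi-triangulated, as claimed.

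The step I expect to be the main obstacle is the clean justification of operations (i) and (ii): one has to argue that permuting and reflecting the flaps really does yield a valid planar embedding in which all faces stay triangles, and that the rotation around $v$ has precisely the ``copy of $(u,v)$, then interior of one flap, then copy of $(u,v)$'' structure. This is exactly the content of the cited structural result, and once it is invoked the remaining local surgery---merging two triangles across a removed copy of $(u,v)$ and re-triangulating the resulting quadrilateral---is routine.
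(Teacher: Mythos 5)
Your proposal is correct and follows essentially the same route as the paper's own proof: reverse and swap (reflect and permute) the cut-components until $b_i^\alpha$ and $b_j^\beta$ face a common copy of $(u,v)$, delete that copy to merge two triangles into a quadrilateral, and insert $(b_i^\alpha,b_j^\beta)$ as its diagonal. The paper cites \cite{DiBattista96b} for the validity of these re-embedding operations, exactly the step you flag as the main thing needing external justification.
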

\begin{proof}
This follows from the results in \cite{DiBattista96b}. In a nutshell, we
can reverse and swap cut-components until $b_i^\alpha$ and $b_j^\beta$
both face one copy of $(u,v)$.  Deleting this copy gives a face with 4
edges; inserting  edge $(b_i^\alpha,b_j^\beta)$ into this face gives a
planar graph where all faces are triangles.
\qed
\end{proof}

\paragraph{Bag-intervals: }
Let $\{b_i^\ell,b_i^r\}$ be the peripheral-pair of a cut-component $C_i$
of a cutting pair $\{u,v\}$.  Since $G$ is multi-triangulated, $\{u,v,
b_i^\alpha\}$ forms a triangle for $\alpha\in \{\ell,r\}$.
By the properties of the path decomposition there must exist at least one bag 
that contains all three vertices.  Thus let $X(b_i^\alpha)$ be a
bag containing $\{u,v,b_i^\alpha\}$; choose an arbitrary one if there is
more than one.
So far the superscripts $\ell$ and $r$ for $\{b_i^\ell,b_i^r\}$ effectively 
meant ``one'' and ``the other'', since we can reverse the planar embedding 
of cut-component $C_i$.  We now fix the superscripts
such that $X(b_i^\ell) \preceq X(b_i^r)$, i.e., the bag of $b_i^\ell$
is left of the bag of $b_i^r$.  The left-open set of bags 
$(X(b_i^\ell),X(b_i^r)]:=
\{X: X(b_i^\ell)\prec X \preceq X(b_i^r)\}$ is called 
the {\em bag-interval} of peripheral pair $\{b_i^\ell,b_i^r\}$.
Notice that the bag-interval is empty if $X(b_i^\ell)=X(b_i^r)$;
this will not pose problems.

\paragraph{Child-peripheral-pairs: }
So far all cut-components at a cutting pair have been treated equally.
For token-accounting-purposes, we introduce a hierarchy among them.
Fix one edge $e$ of $G$ that is not a multi-edge. 
For each cutting pair $\{u,v\}$
with cut-components $C_0,\dots,C_L$, the {\em parent-component}
of $\{u,v\}$ is the one that contains edge $e$, while all other
cut-components are called {\em child-components}.  Correspondingly
we call a peripheral-pair of $\{u,v\}$ a {\em child-peripheral-pair}
if it belongs to a child-component of $\{u,v\}$.  

\paragraph{Redistributing tokens: }
Let $\calB$ be the union, over all cutting pairs $\{u,v\}$, of all
the child-peripheral-pairs of $\{u,v\}$.  We want to redistribute
vertex-tokens to child-peripheral-pairs, and for this we need an observation.

\begin{lemma}
\label{lem:paths}
Let $\calB$ be the set of all child-peripheral pairs in $G$.
There exists a set of vertex-disjoint paths $P_1,\dots,P_{|\calB|}$ in $G$ 
such that for any child-peripheral-pair $\{b^\ell,b^r\}$ in $\calB$, one of the
paths connects $b^\ell$ with $b^r$.
\end{lemma}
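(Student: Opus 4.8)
The plan is to exhibit, for each child-peripheral-pair, a concrete path through the interior of its own cut-component, and then to argue that with the right choice these paths are vertex-disjoint.

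First I would record the obvious candidate. Fix a child-component $C_i$ of a cutting pair $\{u,v\}$ with peripheral pair $\{b_i^\ell,b_i^r\}$. The edges from $v$ into $\int C_i$ occur consecutively in the rotation at $v$, and since $G$ is multi-triangulated every two consecutive such edges bound a triangle. Hence the neighbors of $v$ lying in $\int C_i$, read in rotation order, form a path in $G$ from $b_i^\ell$ to $b_i^r$ that stays inside $\int C_i$ (this also covers the degenerate case $b_i^\ell=b_i^r$, giving a one-vertex path). So connecting a single pair is never the issue; the whole difficulty is choosing all $|\calB|$ paths simultaneously disjoint.

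Next I would set up the nesting structure. Viewing each cut-component as the embedded region bounded by the copies of its defining multi-edge, planarity forces any two such regions to have non-crossing boundaries, so they are either nested or interior-disjoint; thus the child-components of all cutting pairs form a laminar family (the structural facts of \cite{Bie-WG15TR,DiBattista96b} make this precise). Two child-components of the \emph{same} cutting pair have disjoint interiors, so the only overlaps are proper nestings. The naive ``neighbors of $v$'' paths are therefore \emph{not} yet disjoint: if $\{u',v\}$ is a second cutting pair sharing the vertex $v$ whose child-component $C'$ nests inside $C_i$, then both candidate paths hug $v$ and one is a subpath of the other. This is the main obstacle.

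I would resolve it by routing each path strictly inside its \emph{private} region $\mathrm{priv}(C_i):=\int C_i\setminus\bigcup_{C'\subsetneq C_i}\int C'$, where the union runs over the child-components nested in $C_i$. By laminarity the sets $\mathrm{priv}(C_i)$ are pairwise disjoint (if $C'\subsetneq C_i$ then $\int C'$ is removed from $\mathrm{priv}(C_i)$ while $\mathrm{priv}(C')\subseteq\int C'$; incomparable child-components already have disjoint interiors). Two facts then remain. (i) The peripheral vertices lie in their own private region: $b_i^\ell$ is the apex of the boundary triangle on a copy of $(u,v)$, hence adjacent to both $u$ and $v$, whereas a vertex in the interior of a nested component can reach the outside only through that component's cutting pair and so cannot be adjacent to both $u$ and $v$ unless $\{u,v\}$ is that very pair; thus $b_i^\ell,b_i^r\in\mathrm{priv}(C_i)$. (ii) The induced graph $G[\mathrm{priv}(C_i)]$ is connected: $\int C_i$ is a triangulated disk, and deleting the interior of each maximal nested child-component merely cuts this disk along the copies of that component's defining multi-edge, leaving a smaller triangulated disk. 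Granting (i)--(ii), I take $P_i$ to be any path in $G[\mathrm{priv}(C_i)]$ from $b_i^\ell$ to $b_i^r$; the resulting $|\calB|$ paths are vertex-disjoint because the private regions are. The technical heart is claim (ii), which I would prove by peeling off the nested sub-lenses one at a time along their chords, using that cutting a disk along a boundary chord keeps it a disk.
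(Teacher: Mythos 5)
Your route is genuinely different from the paper's. The paper gets each path almost for free: since $(u,v)$ is an edge, the four vertices $u,v,b_i^\ell,b_i^r$ lie in a common triconnected component $D$, and 3-connectivity of $D$ yields a path from $b_i^\ell$ to $b_i^r$ in $D-\{u,v\}$; disjointness then follows from a three-line case analysis on which cut-component of $\{u,v\}$ contains the triconnected component of the other pair (different child-component, parent-component, or a child-component --- in the last case one swaps the roles of the two cutting pairs). That argument never needs laminarity of the lens regions or any induced-subgraph connectivity claim. Your approach replaces the triconnected-component machinery with an explicit topological decomposition into ``private regions,'' which is more concrete and self-contained, at the price of two structural claims (laminarity, and connectivity of $G[\mathrm{priv}(C_i)]$) that the paper's argument sidesteps entirely. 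Your claim (i) is argued correctly, and the pairwise disjointness of the private regions does follow from laminarity as you say.

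The soft spot is exactly where you put it: claim (ii). As written it is asserted rather than proved, and the justification offered --- ``cutting a disk along a boundary chord keeps it a disk'' --- does not describe the actual operation: a nested child-lens of a pair $\{u',v'\}$ with $u',v'\in\int C_i$ is bounded by two \emph{interior} copies of the multi-edge, not by a boundary chord, and deleting its open interior punches a bigon face into the middle of the disk. The statement is nevertheless true, and the clean fix is combinatorial rather than topological: take any $x$--$y$ path in the connected graph $G[\int C_i]$ minimizing the number of vertices outside $\mathrm{priv}(C_i)$; a maximal subpath inside a removed $\int D$ (for $D$ a \emph{maximal} nested child-lens with poles $u_D,v_D$) is flanked by vertices of $\{u_D,v_D\}\cap\int C_i$, and can be shortcut either through that single pole or through the multi-edge $(u_D,v_D)$ itself. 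For this you additionally need that the poles of maximal removed lenses survive into $\mathrm{priv}(C_i)$ (otherwise the shortcut re-enters a removed region); that follows from maximality, since a pole lying in $\int D'$ would force $D\subsetneq D'$. None of this is hard, but it is the actual content of the lemma under your decomposition, so you should not leave it as a one-sentence sketch --- especially one that mischaracterizes the cut.
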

\begin{proof} (Sketch) 
Consider any child-peripheral-pair $\{b_i^\ell,b_i^r\}$,
say at cut-component $C_i$ of cutting pair $\{u,v\}$.  Observe that 
there are three vertex-disjoint paths from $b_i^\ell$ to $b_i^r$:
one via $u$, one via $v$, and one within $\int(C_i)=C_i-\{u,v\}$ 
since the latter
is connected by definition of cut-components.  Since $(u,v)$ is an edge,
therefore $\{u,v,b_i^\ell,b_i^r\}$ must all
belong to one triconnected component, call it $D$.
Since $D$ is 3-connected, there must exist a path in $D-\{u,v\}$
connecting $b_i^\ell$ and $b_i^r$.
One can now show (see \cite{Bie-WG15TR}) that choosing this path for 
peripheral pair $\{b_i^\ell,b_i^r\}$ will assign vertex-disjoint paths
to all child-peripheral pairs.
\qed
\end{proof}

We now redistribute vertex-tokens to child-peripheral pairs as follows.  
For every child-peripheral-pair $\{b^\ell,b^r\}$, find the path $P$ connecting
$b^\ell$ and $b^r$ from Lemma~\ref{lem:paths}.  For every vertex $w\in P$,
declare the vertex-token of $w$ to belong to the
child-peripheral-pair $\{b^\ell,b^r\}$; we now call it a
{\em child-peripheral-pair token} and say that it {\em belongs to 
$\{b^\ell,b^r\}$}.  Since the paths of 
child-peripheral-pairs are vertex-disjoint, every vertex-token is used
at most once.

By properties of a path decomposition, the set of bags 
$\calX_P=\{X: X$ contains a vertex of $P\}$ forms an interval of bags
since $P$ is connected.  Each bag in $\calX_P$ obtains at least one
token of $\{b^\ell,b^r\}$.  Since $X(b^\ell),
X(b^r)\in \calX_P$, we therefore have:

\begin{invariant}
\label{inv}
(1) For every child-peripheral-pair $\{b^\ell,b^r\}$, every
bag $X$ in the bag-interval $(X(b^\ell),X(b^r)]$ contains
at least one token of $\{b^\ell,b^r\}$.
(2)
For every cutting pair $\{u,v\}$, every bag containing
both $u$ and $v$ contains two tokens of $\{u,v\}$.
\end{invariant}

\paragraph{Adding edges: } 
We now repeatedly delete one copy of a multi-edge $(u,v)$
and replace it with some edge $(b_i^\alpha,b_j^\beta)$
between two different cut-components of $\{u,v\}$.  Notice
that no such edge can have existed before, so the sum of
the multiplicities of multi-edges decreases.  By
Observation~\ref{obs:planar}, adding these edges maintains
a multi-triangulation.  After
repeated applications we hence end with a simple graph.  
Throughout these edge additions, we maintain a valid path
decomposition for the graph by adding vertices to bags, if
needed.  This uses up some tokens, but we do it in such a
way that Invariant~\ref{inv} is maintained and hence the
pathwidth is at most $2pw(G)+1+2c$.

So let $\{u,v\}$ be a cutting pair.  
Let $C_0,\dots,C_L$ be the cut components of $\{u,v\}$, with
$C_0$ the parent-component.  For each component $C_i$,
let $\{b_i^\ell,b_i^r\}$ be the peripheral-pair of $C_i$.  
We distinguish cases.
\begin{enumerate}
\item There exists some $i\neq j$, $i>0$, $j>0$
	such that $X(b_i^\ell) \prec X(b_j^\ell) \prec  X(b_i^r) 
	\prec X(b_j^r)$.
	Put differently, there are two child components  $C_i$ and $C_j$ 
	whose bag-intervals intersect, 
	but neither one contains the other.
	See also Figure~\ref{fig:replace_tokens_1}.

	Add an edge $(b_j^\ell,b_i^r)$.
	Since both $C_i$ and $C_j$ are child-components, 
	by the invariant each bag $X$ with $X(b_j^\ell)\prec X \preceq X(b_i^r)$
	contains one token of
	$\{b_j^\ell,b_j^r\}$ and one token of
	$\{b_i^\ell,b_i^r\}$.  We use one of them to add
	$b_j^\ell$ to all these bags; then $b_j^\ell$ and
	$b_i^r$ share a bag, the bags containing $b_j^\ell$ continue
	to form an interval, and we  hence have a valid
	path decomposition for the new graph. 

	Adding the edge combines child-components $C_i$ and $C_j$ into
	one new child-component $C'$  with peripheral-pair
	$\{b_i^\ell,b_j^r\}$.  Since we used only one token in each bag, all
	bags $X$ with $X(b_i^\ell) \prec X \preceq X(b_j^r)$ have
	a peripheral-pair-token left, which we now assign to $C'$.
	So the invariant holds. 
\item There exists some $i\neq j$, $i>0$, $j>0$,
	such that $X(b_i^\ell) \preceq X(b_j^\ell) \preceq X(b_j^r) \preceq X(b_i^r)$.
	Put differently, there are two child components  $C_i$ and $C_j$ 
	whose bag-intervals intersect,  and one
	is inside the other.	
	See also Figure~\ref{fig:replace_tokens_1}.

	Add an edge $(b_i^\ell,b_j^\ell)$.
	Each bag $X$ with $X(b_i^\ell)\prec X \preceq X(b_j^\ell)$
	contains a token of $\{b_i^\ell,b_i^r\}$.
	We use this to add $b_i^\ell$ to all these bags; then $b_i^\ell$ and
	$b_j^\ell$ share a bag and the bags containing $b_i^\ell$ are
	consecutive, hence we have a valid
	path decomposition of the new graph.

	Adding the edge combines components $C_i$ and $C_j$ into
	one new component $C'$ with
	peripheral-pair $\{b_i^r,b_j^r\}$.  Since we used only tokens in bags
	farther to the left, all
	bags $X$ with $X(b_j^r) \prec X \preceq X(b_i^r)$ still have
	the token of $\{b_i^\ell,b_i^r$\}, and
	we assign these to the new peripheral-pair.
	So the invariant holds.

\newpage
\begin{figure}[t]
\hspace*{\fill}
\includegraphics[width=0.6\linewidth,page=1]{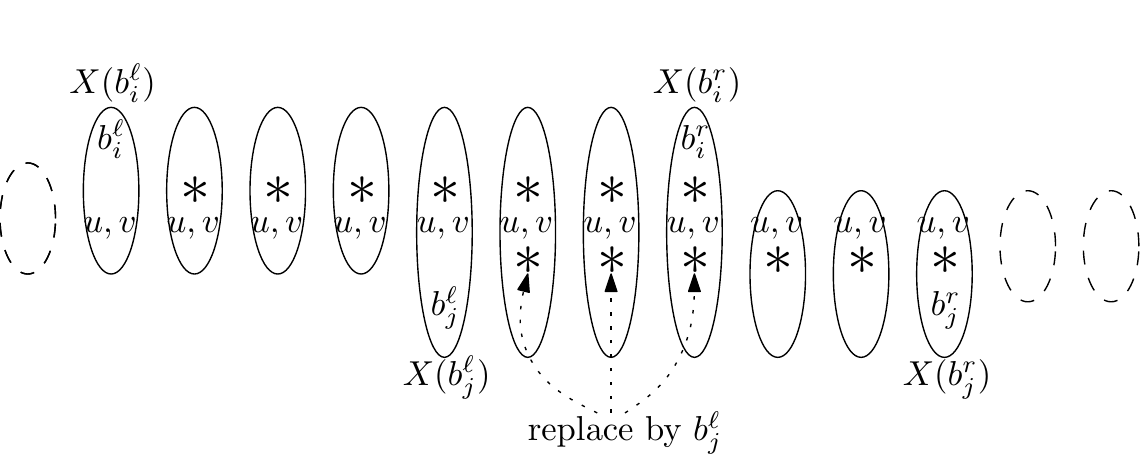}
\hspace*{\fill} \\
\hspace*{\fill}
\includegraphics[width=0.6\linewidth,page=2]{replaceTokens.pdf}
\hspace*{\fill}
\caption{Bag-intervals with peripheral-pair-tokens (shown with $*$). 
(Top) The bag-intervals intersect, but neither contains the
other.  (Bottom) One bag-interval is a subset of the other. }
\label{fig:replace_tokens_1}
\end{figure}

\vspace*{-10mm}
\item No two bag-intervals of two child-components
	intersect.  After possible renaming of the child components	
	$C_1,\dots,C_L$, we may hence assume that
	$X(b_1^\ell)\preceq X(b_1^r)\preceq X(b_2^\ell)\preceq X(b_2^r)\preceq 
	\dots \preceq  X(b_L^\ell)\preceq X(b_L^r).$
	(The bag-interval of the parent-component may be anywhere in this 
	order.)
	See also Figure~\ref{fig:replace_tokens_2}.

\begin{figure}[ht]
\hspace*{\fill}
\includegraphics[width=0.7\linewidth,page=3]{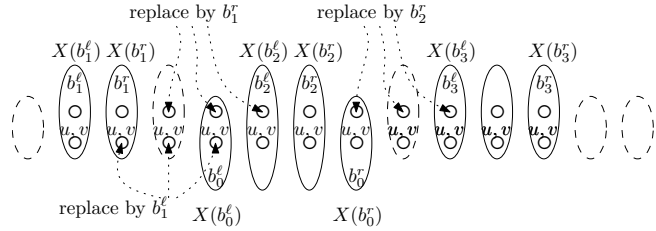}
\hspace*{\fill}
\caption{Replacing cutting-pair-tokens (shown with $\circ$) to combine all
remaining cut-components of cutting pair $\{u,v\}$.  }
\label{fig:replace_tokens_2}
\end{figure}

	We will combine {\em all} cut components
	into one at once. 
	Add edges $(b_1^r,b_2^\ell)$,
	$(b_2^r,b_3^\ell),\dots,(b_{L-1}^r,b_L^\ell)$.  To create
	a path decomposition for this, add $b_i^r$ to all bags
	$X$ with $X(b_i^r)\prec X \preceq X(b_{i+1}^\ell)$, for $i=1,\dots,L-1$.  
	Pay for these additions with the first token of $(u,v)$.
	We know that each of these bag has such a token, since $X(b_1^\ell)$
	and $X(b_L^r)$ contain $\{u,v\}$ by definition, and the bags
	between must contain $\{u,v\}$ by properties of a path
	decomposition.

	Finally add edge $(b_1^\ell,b_0^r)$.  Create a path decomposition
	for this
	by adding $b_1^\ell$ to all bags from $X(b_1^\ell)$ to
	$X(b_0^r)$,
	and pay for it with the second cutting-pair-token of $(u,v)$.

	Observation~\ref{obs:planar} applies to all added
	edges, since the ends of each edge are peripheral-vertices of
	two different cut-components, even after considering that previous
	edge-additions merged some them.  Hence the
	resulting graph is a multi-triangulation after we deleted $L+1$
	copies of multi-edge $(u,v)$.

	Since $\{u,v\}$ ceases to be a
	cutting pair after adding these edges, the invariant holds again
	since we only used tokens of $\{u,v\}$.
\end{enumerate}
After repeatedly applying the above edge-additions to all cutting
pairs, we hence end with a triangulated graph and a path decomposition
of width at most $2pw(G)+1+2c$ as desired.
This proves Lemma~\ref{lem:multi}.  

\bigskip

\begin{lemma}
\label{lem:2conn}
\label{lem:2conn_simple}
Let $G$ be a 2-connected planar graph with $n\geq 3$ vertices.  
Then we can triangulate
$G$, after possibly changing the planar embedding, such that the 
result has pathwidth at most $8pw(G)-5$.
\end{lemma}
\begin{proof}
By Lemma~\ref{lem:2connectedToTriangulatedMulti}
we ca multi-triangulate $G$ without increasing
the pathwidth.  Call the result $G_1$.
By Lemma~\ref{lem:multi} we can triangulate $G_1$
such that the resulting graph $G_2$ has $pw(G_2)\leq 2pw(G)+1+2c$.

It remains to bound $c$.  Recall that this is the maximum number of
cutting pairs of $G_1$ for which all vertices occur in one bag $X_i$
(of some path decomposition $\calP$ of width $pw(G_1)$).
Each such cutting pair corresponds to a multi-edge in $G_1$.
Let $G[X_i]$ be the graph induced by $X_i$ and $G_s$ be its
underlying simple graph.  Each such cutting pair hence corresponds
to an edge in $G_s$.  Since $G_s$ is planar and simple and has $|X_i|$
vertices, it has at most $3|X_i|-6\leq 3(pw(G)+1)-6=3pw(G)-3$ edges 
if $|X_i|\geq 3$.  If $|X_i|\leq 2$, then $G_s$ has at most
$1\leq 3pw(G)-3$ edges since $pw(G)\geq 2$ (a graph of pathwidth 1
is a forest and cannot be 2-connected).
Thus  either way $G_s$ has at most $3pw(G)-3$ edges, hence $c\leq 3pw(G)-3$
and $pw(G_2)\leq 2pw(G)+1+2c \leq 
8pw(G)-5$ as desired.
\qed
\end{proof}

\section{2-connecting an outer-planar graph}

Recall that one motivation for this paper was the question how
to make an outer-planar graph 2-connected by adding edges without
increasing the pathwidth much.  A {\em maximal outer-planar graph}
is a simple outer-planar graph to which we cannot add edges without
violating planarity, simplicity, or outer-planarity.  Such a graph
is 2-connected.

\begin{theorem}
Let $G$ be a simple connected outer-planar graph.
Then we can add edges to $G$, after possibly changing the planar
embedding,
to obtain a maximal outer-planar graph $G'$ with $pw(G')\leq 4pw(G)+4$.
\end{theorem}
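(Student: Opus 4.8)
The plan is to reduce the outer-planar case to the machinery already developed for 2-connected planar graphs, but to exploit the special structure of outer-planar graphs to obtain a smaller constant than the generic $8pw(G)-5$ bound of Lemma~\ref{lem:2conn}. The key idea is to add a universal vertex $z^*$ adjacent to all vertices of $G$, giving a planar graph $G\cup\{z^*\}$, and to work with this augmented graph. Since $G$ is outer-planar, $G\cup\{z^*\}$ is planar by definition, and adding $z^*$ to every bag of a path decomposition of $G$ increases the pathwidth by at most one, giving $pw(G\cup\{z^*\})\leq pw(G)+1$. In the augmented graph, $z^*$ together with any cut-vertex or structural feature of $G$ forms a 2-connected graph (every vertex of $G$ is now connected to $z^*$), so I would first argue that $G\cup\{z^*\}$ is 2-connected, which lets me apply the multi-triangulation and triangulation lemmas.

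Concretely, I would first apply Lemma~\ref{lem:2connectedToTriangulatedMulti} to multi-triangulate $G\cup\{z^*\}$ without increasing pathwidth, obtaining a multi-triangulated graph $G_1$ with $pw(G_1)\leq pw(G)+1$. Then I would apply Lemma~\ref{lem:multi} to triangulate $G_1$, obtaining $G_2$ with $pw(G_2)\leq 2pw(G_1)+1+2c$. The crucial improvement over the general 2-connected case comes from bounding the parameter $c$ more tightly: in the outer-planar setting, the cutting pairs correspond to multi-edges, and these multi-edges all involve the special vertex $z^*$ in a controlled way, so the number of cutting pairs whose four endpoints fit in a single bag is much smaller than the generic $3pw(G)-3$ bound. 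I expect $c$ can be bounded by something like $pw(G)+1$ or a similar linear expression that keeps the final constant at $4pw(G)+4$. After triangulating, I would remove $z^*$ from $G_2$: deleting a vertex can only decrease pathwidth, and the faces incident to $z^*$ in the triangulation correspond to edges of a maximal outer-planar graph once $z^*$ is deleted.

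The final step is to verify that $G_2-z^*$ is a maximal outer-planar graph containing all original edges of $G$. Since $G_2$ is a triangulation of $G\cup\{z^*\}$ and $z^*$ was universal, every face of $G_2$ incident to $z^*$ is a triangle $z^*,a,b$ where $(a,b)$ is an edge of $G_2-z^*$ lying on its outer boundary; thus $G_2-z^*$ is outer-planar with $z^*$ witnessing the outer face, and it is maximal because any addable edge would correspond to a non-triangular face incident to $z^*$, contradicting that $G_2$ was fully triangulated. The bound then follows by combining the inequalities: $pw(G')\leq pw(G_2)\leq 2(pw(G)+1)+1+2c$, and with the tightened bound on $c$ this should collapse to at most $4pw(G)+4$.

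The main obstacle I anticipate is the careful bounding of $c$ in the outer-planar setting, together with the bookkeeping needed to ensure that triangulating $G\cup\{z^*\}$ really does produce a \emph{simple} maximal outer-planar graph after deleting $z^*$ rather than one with residual multi-edges on the outer boundary. In particular, Lemma~\ref{lem:multi} may change the planar embedding, so I must check that the embedding produced still keeps all of $G$'s vertices on a single face witnessed by $z^*$, i.e.~that $z^*$ remains universal and incident only to triangles throughout; ensuring the multi-edge removal never creates an edge that would violate outer-planarity of $G_2-z^*$ is the delicate part. If instead one cannot directly control the embedding, the fallback is to triangulate $G$ itself (making it 2-connected en route) and track outer-planarity via the outer face explicitly, but the $z^*$-augmentation approach is cleaner and should give the stated constant.
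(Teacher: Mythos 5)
Your proposal follows essentially the same route as the paper: add a universal vertex $z^*$, observe that $G\cup\{z^*\}$ is $2$-connected with every cutting pair containing $z^*$, multi-triangulate via Lemma~\ref{lem:2conn_multi}, remove multi-edges via Lemma~\ref{lem:multi} with an improved bound on $c$, and finally delete $z^*$. Two points need tightening to actually reach $4pw(G)+4$. First, the bound on $c$ that you only ``expect'' is exactly the paper's argument and is one line: since every cutting pair of the multi-triangulated graph contains $z^*$, any bag $X_i$ holding a cutting pair holds $z^*$, and each such pair is $\{z^*,w\}$ for a distinct $w\in X_i$, so $c\leq |X_i|-1\leq pw(G)+1$; you should state this rather than leave it as a hope. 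Second, your final step ``deleting a vertex can only decrease pathwidth'' yields only $pw(G')\leq pw(G_2)\leq 2(pw(G)+1)+1+2(pw(G)+1)=4pw(G)+5$, one more than claimed. The missing observation is that the construction never removes a vertex from a bag, so $z^*$ remains in \emph{every} bag of the decomposition produced for the triangulated graph; deleting $z^*$ therefore shrinks every bag by one and the width drops by exactly one, giving $4pw(G)+4$. Your concerns about residual multi-edges and about $z^*$ remaining universal are unfounded: Lemma~\ref{lem:multi} outputs a simple triangulated graph, edges are only ever added (never removed, apart from surplus copies of multi-edges), so $z^*$ stays adjacent to all vertices, and deleting a universal vertex from a simple triangulated planar graph leaves a maximal outer-planar graph, exactly as you verify in your third paragraph.
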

\begin{proof}
If $n=1$ then $G$ is already maximal outer-planar, so assume $n\geq 2$.
Add a universal vertex $z^*$ to $G$ and call the result $G_1$; we
know that $G_1$ is planar and $pw(G_1)= pw(G)+1$ since we can
add $z^*$ to all bags. 
Observe that $G_1-v$ is connected for any $v\neq z^*$ since $z^*$ is
adjacent to all vertices.  Therefore $G_1$ is 2-connected and
any cutting pair of $G_1$ must include $z^*$.

Use Lemma~\ref{lem:2conn_multi} to multi-triangulate $G_1$ without
increasing pathwidth, and call the result $G_2$; we have
$pw(G_2)=pw(G)+1$.  Now use Lemma~\ref{le:multi-edge} to 
triangulate $G_2$, and call the result $G_3$.  We have
$pw(G_3)\leq 2pw(G_2)+1+2c \leq 2pw(G)+3+2c$.

Since any cutting pair includes $z^*$, we can get an improved bound for $c$
as follows.
Let $\calP_2$ be any path decomposition of $G_2$ of width $pw(G_2)$
and let $X_i$ be any bag of $\calP_2$; we have $|X_i|\leq pw(G_2)+1
=pw(G)+2$.  If $X_i$ contains cutting pairs, then it must contain
$z^*$.  Each such cutting pair uses $z^*$ and one other vertex in
$X_i$, so there are at most $|X_i|-1$ cutting pairs with both ends
in $X_i$, and $c\leq |X_i|-1 \leq pw(G)+1$.
Putting it all together, we have
$pw(G_3)\leq 2pw(G)+3+2(pw(G)+1)=4pw(G)+5$.

Finally delete the added vertex $z^*$ to obtain $G_4$, which has the same
vertices as $G$.  Since $z^*$ was universal and
$G_3$ was triangulated, $G_4$ is maximal
outer-planar.  Since $z^*$ was universal,
$pw(G_4)=pw(G_3)-1\leq 4pw(G)+4$ and hence $G_4$ satisfies all conditions on $G'$.
\qed
\end{proof}

We note here that the bound can be improved to $4pw(G)+3$ by delving
into the proofs of Lemma~\ref{lem:multi} and
Lemma~\ref{lem:paths} and observing that the vertex-token of $z^*$
will never be used as child-peripheral-pair-token, since $z^*$ is in
all cutting pairs. We leave the details to the reader.

\section{All graphs}
\label{se:cutvertices}
\label{sec:cutvertices}

We now show how to handle cutvertices and disconnected graphs.

\begin{lemma}
\label{lem:conn}
Any simple connected planar graph $G$ with $n\geq 3$ can
be triangulated, after possibly changing the planar embedding,
so that the result has pathwidth at most $16pw(G)+3$.
\end{lemma}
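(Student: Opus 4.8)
We must triangulate a simple connected planar graph $G$ with cut-vertices, while keeping the pathwidth proportional to $pw(G)$.

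\paragraph{Overall strategy.}
The plan is to reduce to the $2$-connected case already solved in Lemma~\ref{lem:2conn}. Concretely, I would first add edges to $G$ so that it becomes $2$-connected while only roughly doubling the pathwidth, and then triangulate the resulting graph with Lemma~\ref{lem:2conn}. The target constant is explained by the arithmetic: if we can $2$-connect $G$ into a graph $G_1$ with $pw(G_1)\leq 2pw(G)+1$, then Lemma~\ref{lem:2conn} produces a triangulation of pathwidth at most $8\,pw(G_1)-5\leq 8(2pw(G)+1)-5=16pw(G)+3$, exactly as claimed (if $G$ happens to be $2$-connected already we skip straight to Lemma~\ref{lem:2conn}). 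So the whole difficulty is concentrated in one sub-claim: \emph{every connected simple planar graph with $n\geq 3$ can be made $2$-connected by adding edges, after possibly changing the embedding, without raising the pathwidth above $2pw(G)+1$.}

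\paragraph{Making $G$ $2$-connected.}
For the sub-claim I would mimic the token argument of Lemma~\ref{lem:multi}, treating a cut-vertex $v$ as a degenerate cutting pair. Fix a path decomposition $\calP$ of $G$ of width $pw(G)$ and give each bag one vertex-token per vertex, so the token budget per bag is $2|X_i|\leq 2pw(G)+2$; any path decomposition that never exceeds its tokens then has width at most $2pw(G)+1$. At a cut-vertex $v$ the cut-components appear in consecutive blocks of the rotation around $v$, so each cut-component $C_i$ has a first and last neighbor $b_i^\ell,b_i^r$ of $v$ playing the role of a peripheral pair, and $\int(C_i)=C_i-v$ is connected and supplies a path from $b_i^\ell$ to $b_i^r$ avoiding $v$. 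As in Lemma~\ref{lem:paths} these paths can be chosen vertex-disjoint over all cut-components, and their vertex-tokens are redistributed so that every bag in the bag-interval $(X(b_i^\ell),X(b_i^r)]$ carries a token of $\{b_i^\ell,b_i^r\}$. I would then eliminate $v$ exactly as in the three cases of Lemma~\ref{lem:multi}: when two cut-components have crossing or nested bag-intervals I merge them with a single edge paid for by these peripheral tokens, and otherwise I sort all cut-components at $v$ by bag position and chain them into a single component of $G-v$ by edges $(b_i^r,b_{i+1}^\ell)$ between bag-consecutive components, each placed legally by permuting and reflecting the cut-components around $v$ (the analog of Observation~\ref{obs:planar}, which here is immediate since distinct cut-components meet only in $v$). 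Once all cut-components at $v$ lie in one component of $G-v$, the vertex $v$ is no longer a cut-vertex.

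\paragraph{Why no extra tokens are needed, and the main obstacle.}
What makes the constant come out to $2pw(G)+1$ rather than something larger is that, unlike the cutting-pair case, the chaining step at a cut-vertex can be paid for by a \emph{single} token kind, namely the vertex-token of $v$ itself: after sorting the components, the gaps $[X(b_i^r),X(b_{i+1}^\ell)]$ along which I extend a peripheral vertex are pairwise disjoint, and $v$ lies in every such bag, so one $v$-token per bag suffices for the entire chain. Because all cut-components (including the one holding the fixed non-multi edge) are threaded into this one chain, no separate edge to a parent component --- and hence no second dedicated token --- is ever required, which is why the budget stays at $2|X_i|$. The step I expect to be the real obstacle is establishing the vertex-disjoint path system together with a conflict-free charging: I must show that the peripheral-pair paths can be taken disjoint not only across the cut-components of a single cut-vertex but globally over the nested block structure of $G$, and that no vertex-token is claimed twice --- in particular that a vertex serving as a cut-vertex in its chaining step is not simultaneously needed on a peripheral-pair path covering the same bag. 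I would resolve this by routing each peripheral path through the interior $\int(C_i)$ and appealing to the block--cut-tree hierarchy, in the same spirit as the triconnected-component argument sketched for Lemma~\ref{lem:paths}.
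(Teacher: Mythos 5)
Your overall plan differs from the paper's. You try to make $G$ 2-connected by \emph{adding edges only}, via a token argument modelled on Lemma~\ref{lem:multi} with cut-vertices playing the role of degenerate cutting pairs, and then invoke Lemma~\ref{lem:2conn}. The paper instead inserts a \emph{new vertex} $z_1$ into a face incident to all cut-components of a cut-vertex $v_1$ (after making each component's edges consecutive around $v_1$), joins $z_1$ to $v_1$ and its neighbors on that face, and adds $z_1$ to every bag containing $v_1$; since the $z_i$ are never themselves cut-vertices, bag sizes at most double, giving $pw(G_1)\leq 2pw(G)+1$ with no token bookkeeping at all. It then triangulates with Lemmas~\ref{lem:2conn_multi} and~\ref{lem:2conn_simple} and finally \emph{contracts} each $z_i$ into a neighbor with exactly two common neighbors, which preserves the triangulation and cannot increase pathwidth. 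Your arithmetic ($8(2pw(G)+1)-5=16pw(G)+3$) matches the paper's, and your route would have the aesthetic advantage of never introducing auxiliary vertices --- but as written it has a genuine gap.

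The gap is exactly the point you flag and then defer: the conflict-free charging. In Lemma~\ref{lem:multi} the two tokens that pay for the ``chaining'' step (case 3) are \emph{extra} cutting-pair tokens, added on top of the $|X_i|$ vertex-tokens precisely because a vertex-token may already be committed to a peripheral-pair path. You propose to pay for the chain at a cut-vertex $v$ with $v$'s own vertex-token, but that token can be claimed by a peripheral-pair path of a \emph{different} cut-vertex: if $v'$ is a cut-vertex and $v$ lies on every $b^\ell$--$b^r$ path inside $\int(C)$ for some cut-component $C$ of $v'$ (easy to arrange, e.g.\ $b^\ell$--$v$--$b^r$ with a pendant vertex at $v$), then $v$'s token is needed twice in the same bags and the $2|X_i|$ budget can be exceeded. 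Moreover the global vertex-disjointness of your path system is not established: the analogue of Lemma~\ref{lem:paths} for cut-vertices would route each path inside the block containing $\{v,b^\ell,b^r\}$, but distinct blocks share cut-vertices, so two such paths can still collide at a cut-vertex; and since you assign paths to \emph{all} cut-components (you explicitly drop the parent/child distinction so as to chain everything at once), the path of the ``large'' component of $v$ ranges over a subgraph containing many other cut-vertices and their components. None of this is obviously fatal to the sub-claim itself, but repairing it would require either reintroducing dedicated cut-vertex tokens (which raises the constant) or a new disjointness argument; the paper's insertion-and-contraction device avoids the problem entirely.
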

\begin{proof}
Let $v_1$ be a cut-vertex of $G$. Add a new vertex $z_1$ as follows.
\smallskip

\begin{minipage}{0.4\linewidth}
\begin{picture}(0,0)%
\includegraphics{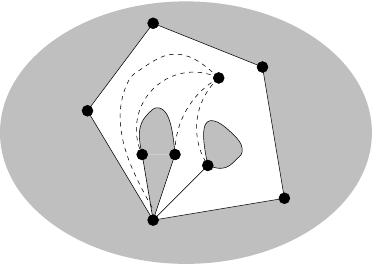}%
\end{picture}%
\setlength{\unitlength}{1381sp}%
\begingroup\makeatletter\ifx\SetFigFont\undefined%
\gdef\SetFigFont#1#2#3#4#5{%
  \reset@font\fontsize{#1}{#2pt}%
  \fontfamily{#3}\fontseries{#4}\fontshape{#5}%
  \selectfont}%
\fi\endgroup%
\begin{picture}(5100,3598)(301,-9660)
\put(2551,-9361){\makebox(0,0)[lb]{\smash{{\SetFigFont{7}{8.4}{\rmdefault}{\mddefault}{\updefault}{\color[rgb]{0,0,0}$v_1$}%
}}}}
\put(3001,-6961){\makebox(0,0)[lb]{\smash{{\SetFigFont{7}{8.4}{\rmdefault}{\mddefault}{\updefault}{\color[rgb]{0,0,0}$z_1$}%
}}}}
\end{picture}%
\end{minipage}
\begin{minipage}{0.55\linewidth}
Let $C_0,\dots,C_L$ be the cut-components of $v_1$.  
Rearrange
the planar embedding at $v_1$ such that for each $C_j$ the edges from $v_1$
to $C_j$ are consecutive at $v_1$.  
In consequence, there
now exists a face $f_1$ that is incident to all cut-components
of $v_1$.  Insert a new vertex $z_1$ in face $f_1$, and make it
adjacent to $v_1$ and to all neighbors $x$ of $v_1$  that are
on $f_1$.  
Afterwards $v_1$ is no longer a cut-vertex, and $z_1$ is also
not a cut-vertex. 
\end{minipage}

\medskip
We can obtain a path decomposition of $G\cup \{z_1\}$ by taking
one of $G$ and adding
$z_1$ to all bags that contains $v_1$.  This covers all new edges
since all neighbors of $z_1$ are neighbors of $v_1$.

Repeat the process in the resulting graph until there are
no cut-vertices left.  Call the final graph $G_1$.  Since
none of the new vertices were cut-vertices, we added at most $|X_i|$
new vertices to each bag $X_i$ of a path decomposition of $G$.
Hence the bag-size at most doubles and
$pw(G_1)\leq 2pw(G)+1$.

Now multi-triangulate $G_1$ with
Lemma~\ref{lem:2conn_multi} and call the result $G_2$.
We have $pw(G_2)= pw(G_1) \leq 2pw(G)+1$.
Now triangulate $G_2$ with Lemma~\ref{lem:2conn_simple} and
call the result $G_3$. 
We have $pw(G_3)\leq 8pw(G_2)-5 \leq 8(2pw(G)+1)-5 = 16pw(G)+3$.

Now we must remove the added vertices while keeping a
triangulated graph, and do this by contracting each 
into a suitable neighbor.  Observe that the neighbors of
$z_1$ form a simple cycle since $G_3$ is triangulated. 
Hence these neighbors induce a simple outer-planar 2-connected
graph.  It is well-known that every such
graph has a vertex of degree 2.  Therefore $z_1$ has
a neighbor $y_1$ such that $y_1$ and $z_1$ have exactly two
common neighbors (which are the third vertices on the faces
incident to edge $(z_1,y_1)$).  {\em Contract} edge $(z_1,y_1)$,
i.e., delete $z_1$ and re-route every incident edge of $z_1$
to end at $y_1$ instead.  Delete resulting loops and 
multi-edges.  Because $z_1$ and $y_1$ had exactly two neighbors in
common, the resulting graph is again triangulated.
Repeat the process for the other added vertices.

At the end the graph $G_4$ that results
has the same vertices as $G$.  It is well-known that contraction
of an edge does not increase pathwidth, so
$pw(G_4)\leq pw(G_3)\leq  16pw(G)+3$ as desired.
\qed
\end{proof}

As for disconnected graphs, one can easily show the following \cite{Bie-WG15TR}:

\begin{lemma}
\label{lem:make_connected}
Let $G$ be a planar graph. 
Then we can add edges to $G$ so that
the resulting graph $G'$ is planar, connected, and $pw(G')=\max\{1,pw(G)\}$.
\end{lemma}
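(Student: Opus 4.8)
```latex
The plan is to reduce the disconnected case to the connected case by
adding edges that bridge distinct connected components, taking care that
the path decomposition only grows in a trivially controllable way.
First I would dispose of the degenerate cases: if $G$ has no edges and at
most one vertex then $pw(G)=0$ and any connected supergraph on $\geq 2$
vertices has pathwidth at least $1$, which matches the target
$\max\{1,pw(G)\}$; in particular a single edge (or a path through all
isolated vertices) already does the job. So assume $G$ has at least two
connected components and fix a path decomposition $\calP=X_1,\dots,X_N$
of $G$ of width $pw(G)$.

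The key observation I would exploit is that the three path-decomposition
axioms let us \emph{concatenate} path decompositions of separate pieces
essentially for free. Let $K_1,\dots,K_m$ be the connected components of
$G$. Each $K_t$ inherits a path decomposition of width at most $pw(G)$
(restrict each bag of $\calP$ to $V(K_t)$; this only shrinks bags). I
would then line these component-decompositions up end to end, so that
the whole sequence is $($bags of $K_1)$ followed by $($bags of $K_2)$,
and so on. Since distinct components share no vertices, the
interval-condition is automatically preserved across the junctions, and
the concatenation is a valid path decomposition of $G$ of width at most
$pw(G)$. Now, to make the graph connected, at each junction between the
last bag $X^{(t)}_{\text{last}}$ of $K_t$'s block and the first bag
$X^{(t+1)}_{\text{first}}$ of $K_{t+1}$'s block, I would pick a vertex
$a_t\in K_t$ appearing in $X^{(t)}_{\text{last}}$ and a vertex
$b_t\in K_{t+1}$ appearing in $X^{(t+1)}_{\text{first}}$, and add the
edge $(a_t,b_t)$. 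These $m-1$ edges connect all the components into a
single connected graph $G'$, and adding them across distinct components
keeps the drawing crossing-free, so $G'$ is planar.

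The main technical point to check, and the place where I would be most
careful, is that the new edges are actually covered by the bags without
widening anything. For each added edge $(a_t,b_t)$ I would insert a small
\emph{bridging bag} $\{a_t,b_t\}$ between the two adjacent blocks; this
bag covers the new edge, has size $2\leq pw(G)+1$ whenever
$pw(G)\geq 1$, and does not disturb any vertex's occurrence-interval
because $a_t$ lies only in $K_t$'s block (and is present in its last bag)
while $b_t$ lies only in $K_{t+1}$'s block (and is present in its first
bag). Hence each of $a_t,b_t$ still occupies a single contiguous run of
bags. This gives a valid path decomposition of $G'$ of width at most
$\max\{1,pw(G)\}$, so $pw(G')\leq\max\{1,pw(G)\}$. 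The reverse
inequality is immediate since $G'\supseteq G$ is connected with
$\geq 2$ vertices, giving $pw(G')\geq\max\{1,pw(G)\}$, and therefore
$pw(G')=\max\{1,pw(G)\}$ as claimed. The only genuine obstacle is the
bookkeeping at the junctions, which the bridging-bag device handles
cleanly; everything else is routine.
```
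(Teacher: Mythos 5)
Your proof is correct and follows essentially the same route as the paper: concatenate path decompositions of the components, insert a two-element bridging bag at each junction holding one vertex from the last bag of one block and one from the first bag of the next, and add exactly those edges. The only (harmless) additions are that you restrict the global decomposition to obtain the component decompositions and that you explicitly verify the lower bound $pw(G')\geq\max\{1,pw(G)\}$, which the paper leaves implicit.
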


Hence we can triangulate $G$ by first creating $G'$ and then triangulating
$G'$.

\section{Conclusion}
\label{se:conclusion}

In this paper, we studied how to add edges to a planar graph without
increasing the pathwidth much.
We summarize all our results with the following:

\begin{theorem}
\label{th:triangulatedSimple}
Let $G$ be a simple planar 
graph with at least 3 vertices.  
Then we can triangulate $G$ such that the result $G'$ has
\begin{itemize}
\item $pw(G')= pw(G)$ if $G$ is 3-connected,
\item $pw(G')\leq 8pw(G)-5$ if $G$ is 2-connected,
\item $pw(G')\leq 16pw(G)+3$ otherwise.
\end{itemize}
\end{theorem}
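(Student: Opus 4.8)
The plan is to read this final theorem as a pure consolidation of the per-connectivity results already established, organized by a case distinction on the connectivity of $G$. Concretely, I would split into four cases according to whether $G$ is $3$-connected, $2$-connected but not $3$-connected, connected but with a cut-vertex, or disconnected, and in each case invoke exactly one earlier statement after checking that its hypotheses (simplicity, $n\geq 3$, and the asserted connectivity) are met. The three displayed bounds are then literally the outputs of Corollary~\ref{cor:3conn}, Lemma~\ref{lem:2conn_simple}, and Lemma~\ref{lem:conn}, so the first three cases require no new argument.

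In detail, for $3$-connected $G$ I would apply Corollary~\ref{cor:3conn} to get a triangulation $G'$ with $pw(G')=pw(G)$. For $2$-connected $G$ that is not $3$-connected, Lemma~\ref{lem:2conn_simple} gives $pw(G')\leq 8pw(G)-5$; here it is worth noting that $pw(G)\geq 2$ for any $2$-connected graph on at least three vertices (pathwidth $1$ forces a forest), so the bound is positive. For connected $G$ with a cut-vertex, Lemma~\ref{lem:conn} gives $pw(G')\leq 16pw(G)+3$, and since such a $G$ has at least two edges we have $pw(G)\geq 1$, so no degeneracy arises. For disconnected $G$ I would first apply Lemma~\ref{lem:make_connected} to add edges and obtain a planar connected graph $G^{*}$ with $pw(G^{*})=\max\{1,pw(G)\}$, and then triangulate $G^{*}$ with Lemma~\ref{lem:conn}, yielding a triangulation of $G$ with $pw(G')\leq 16\,pw(G^{*})+3=16\max\{1,pw(G)\}+3$.

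The main obstacle is precisely the interaction of this last reduction with the endpoint $pw(G)=0$. For every disconnected $G$ with at least one edge we have $\max\{1,pw(G)\}=pw(G)$ and the reduction matches the claimed $16pw(G)+3$ exactly; the only problematic instance is an edgeless graph on $n\geq 3$ vertices, where $pw(G)=0$ but the pipeline above only delivers the weaker value $19$. To close this gap I would handle the edgeless case directly by exhibiting a single triangulation of pathwidth at most $3$: order the vertices as $v_1,\dots,v_n$ and build the stacked triangulation in which $v_1v_2v_3$ is a triangle and each $v_i$ with $i\geq 4$ is inserted into the current triangular face $v_{i-1}v_{i-2}v_{i-3}$, so that $v_i$ is adjacent exactly to its three predecessors. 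This is a simple planar triangulation, and the bags $X_i=\{v_{i-3},v_{i-2},v_{i-1},v_i\}$ form a valid path decomposition of width $3=16\cdot 0+3$ (each edge $(v_i,v_{i-j})$ lives in $X_i$, and every vertex occupies a contiguous block of bags). This degenerate construction, rather than the reduction, is the one step that actually needs checking; everything else is bookkeeping over the already-proved lemmas.
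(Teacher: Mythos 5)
Your proposal matches the paper exactly: the theorem is presented there as a pure summary, with each bullet following from Corollary~\ref{cor:3conn}, Lemma~\ref{lem:2conn_simple}, and Lemma~\ref{lem:conn} (composed with Lemma~\ref{lem:make_connected} for disconnected inputs), precisely as you lay out. Your observation about the edgeless case is a legitimate point the paper glosses over --- for $pw(G)=0$ the chain through $\max\{1,pw(G)\}$ only yields $19$, not $3$ --- and your stacked-triangulation patch with bags $\{v_{i-3},v_{i-2},v_{i-1},v_i\}$ correctly closes that gap.
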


It may also be of interest to observe that our construction does not
change a given path decomposition of the graph other than by adding 
more vertices to some bags.  On the other hand, our construction often
changes the planar embedding.  Is it possible to triangulate a graph
without increasing the pathwidth much and without changing the planar
embedding?

Following the steps of the proof, one can see that 
the triangulation can be found in linear time, presuming that
we are given a path decomposition of width $pw(G)$ in the form 
of the index of the first and last bag containing $v$ for every
vertex $v$.  There is no need to compute triconnected
components: One can find child-components via
multi-edges, and the paths in Lemma~\ref{lem:paths} are
only needed for accounting purposes and need not be computed.

The obvious open problem is to improve the factors, especially for 
2-connected graphs.  Can every planar graph $G$ be triangulated so
that the result has pathwidth at most $\max\{3,pw(G)\}$?  

It would also be of interest to study other
width-parameters (such as the carving width, bandwidth, 
clique-width, etc.) and ask whether planar graphs can be triangulated
while keeping the width-parameter asymptotically the same.

\bibliographystyle{plain}
\bibliography{../../bib/journal,../../bib/full,../../bib/gd,../../bib/papers,extra}

\newpage

\begin{appendix}
\section{Missing details}

\subsection{Properties of multi-triangulated graphs}

\begin{lemma}
\label{lem:mtriangulation}
Let $G$ be a multi-triangulated planar graph with $n\geq 3$ vertices.    
Fix an
arbitrary planar embedding for which all faces are triangles.
The following holds:
\begin{enumerate}
\item $G$ is 2-connected.
\item Any cutting pair $\{u,v\}$ gives rise to a multi-edge $(u,v)$.
\item For any multi-edge $(u,v)$, $\{u,v\}$ is a cutting pair,
	and the number of its cut-components equals the multiplicity
	of the multi-edge.
\item For any cutting pair $\{u,v\}$ with a cut-component $C$
	in the order of edges around $u$ the edges to $\int(C)$ appear
	consecutively, and are preceded and succeeded by copies of $(u,v)$.
\end{enumerate}
\end{lemma}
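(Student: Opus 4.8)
The plan is to prove the four statements roughly in the order (1), (4), (3), (2), since the rotation structure established for (4) immediately yields the multiplicity count needed for (2) and for the forward direction of (3). Throughout I will use only that $G$ is loopless and that every face has exactly three incident edge-sides.

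For (1), I would first observe that no facial circuit can repeat a vertex: a closed walk of exactly three edges $w_0\,e_1\,w_1\,e_2\,w_2\,e_3\,w_0$ in which two of $w_0,w_1,w_2$ coincide forces one of $e_1,e_2,e_3$ to be a loop, which $G$ does not have. Hence every face is a simple triangle whose boundary is a $3$-cycle lying in a single connected component. Connectivity of $G$ then follows because a disconnected plane graph has a face whose boundary is disconnected (or encloses another component), and such a face is incident to more than three edge-sides, contradicting multi-triangulation. Finally, in a connected plane graph a cut-vertex is exactly a vertex that occurs more than once on some facial walk; since triangular faces repeat no vertex, $G$ has no cut-vertex, and with $n\geq 3$ this gives $2$-connectivity.

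For (4), fix a cutting pair $\{u,v\}$ with cut-components $C_0,\dots,C_L$ and read off the cyclic sequence of edges incident to $u$. Two facts drive the argument: (i) consecutive edges $(u,a),(u,a')$ bound a triangular face, so $a\neq a'$ (else a bigon) and $(a,a')$ is an edge, i.e.\ consecutive neighbors of $u$ are adjacent; and (ii) vertices in $\interior(C_i)$ and $\interior(C_j)$ with $i\neq j$ are non-adjacent by the definition of cut-components. Since $V=\{u,v\}\cup\bigcup_i \interior(C_i)$ and, by the $2$-connectivity of (1) (which forbids $u$ from being a cut-vertex), each $\interior(C_i)$ contains a neighbor of $u$, the neighbor immediately following the last $\interior(C_i)$-neighbor in the rotation can lie neither in any $\interior(C_j)$ nor be $u$, hence it is $v$; symmetrically on the other side. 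Thus each maximal arc of $\interior(C_i)$-neighbors is flanked by copies of $(u,v)$. To see that there is only one such arc per component, I would argue by planarity: the two flanking copies of $(u,v)$ bound a region $R$ containing the arc, and any path inside $\interior(C_i)$ starting in $R$ cannot leave $R$ without crossing a flanking edge or passing through $u$ or $v$ (none of which it may do), so all of $\interior(C_i)$, and hence all of $u$'s edges into it, lie in $R$. This proves (4).

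Given (4), the derivation of (2) and (3) is short. Going once around $u$ we meet the $L+1$ arcs (one per component, each non-empty) separated by copies of $(u,v)$, and two such copies can never be adjacent in the rotation (they would bound a bigon face), so there are exactly $L+1$ copies of $(u,v)$; this is both (2) (multiplicity $L+1\geq 2$) and the count in (3). For the converse direction of (3), suppose $(u,v)$ has multiplicity $\mu\geq 2$. The $\mu$ copies, as internally disjoint arcs between $u$ and $v$, cut the sphere into $\mu$ regions; none of these regions is a single face (it would be a bigon), so each contains a vertex in its interior, and vertices in different regions are separated by $\{u,v\}$ because no edge and no path avoiding $u,v$ can cross a copy of $(u,v)$. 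Hence $\{u,v\}$ is a cutting pair, and the count just established shows its number of cut-components equals $\mu$. I expect the main obstacle to be the uniqueness step in (4) -- showing that each cut-component contributes a single consecutive arc around $u$ rather than several -- since this is the one place where I must combine connectivity of $\interior(C_i)$ with a genuine Jordan-curve argument, and where the multi-edge bookkeeping (no bigons, and $v$ appearing exactly once between consecutive arcs) has to be handled with care; the disconnected-boundary subtlety in (1) is a lesser but similar point.
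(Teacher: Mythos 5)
Your proof is correct and rests on exactly the same ingredients as the paper's: triangular, loop-free faces force consecutive neighbors in the rotation to be adjacent, interiors of distinct cut-components are non-adjacent, two parallel copies of an edge can never be rotation-consecutive (no bigons), and a Jordan-curve separation argument for pairs of parallel edges. You merely reorder the deductions---establishing the consecutiveness statement (4) first and reading the multiplicity counts (2) and (3) off it, whereas the paper counts copies versus components first and obtains (4) as a corollary of the equality $\ell=L$---so the two arguments are essentially interchangeable.
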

\begin{proof}
Let $S$ be a cut-set (i.e., either cut-vertex or cutting pair).
Consider a vertex $v\in S$.  Assume for
contradiction that in the clockwise order around $v$ there are two
consecutive neighbors $w_1,w_2$ with $w_1\in \int(C_1)$ and $w_2
\in \int(C_2)$ for two different cut-components $C_1,C_2$ of $S$.
Consider the face $f$ that is between edges $(v,w_1),(v,w_2)$ at $v$.
Since $w_1,w_2$ are in the interior of different cut-components, 
we cannot have an edge $(w_1,w_2)$.  We must have $w_1\neq v\neq w_2$,
since otherwise there would be a loop.  Therefore face $f$ is incident to at
least 4 edges.  Contradiction.

Thus for any two cut-components of $S$, edges from $v$ to the inside
of the cut-component cannot be consecutive.  Thus, there must an
edge between any two cut-components (in the clockwise order around $v$)
for which the other endpoint is also in $S$.  If $|S|=1$ then such
an edge would be a loop, a contradiction.  Therefore no cut-set can
have size $1$ and $G$ is 2-connected; this proves (1).  

If $|S|=2$, say $S$ is the cutting pair $\{u,v\}$, then the cut-components
are separated by copies of edge $(u,v)$.  If there are $L$ cut-components
$C_1,\dots,C_L$ for $L\geq 2$, then there are at least $L$ places in
the clockwise order around $v$ where we switch from one cut-component
to the next one, so we must have at least $L$ copies of $(u,v)$.  This
proves (2).

Let $e_0,\dots,e_{\ell-1}$ be the copies of $(u,v)$, enumerated 
in the clockwise order around $v$.  We have just shown $\ell\geq L$.
For $i=1,\dots,\ell$, edges
$e_{i-1}$ and $e_i$ cannot be consecutive at $v$ (where indices
are modulo $L$), otherwise there would be a face of
degree 2.  So there must be vertices other than $u$ between
$e_{i-1}$ and $e_i$.  Further, the cycle formed by $e_{i-1}$ and $e_i$
separates everything on one side from everything on the other side.
So the subgraph between $e_{i-1}$ and $e_i$ contains at least one
cut-component of $\{u,v\}$.  It follows that $\ell\leq L$, and
so $\ell=L$.  This proves (3).

Since $\ell=L$, the subgraph between $e_{i-1}$ and $e_i$ must contain
exactly one cut-component of $\{u,v\}$.  Therefore in the cyclic
order around $v$ we alternate between a copy of $(u,v)$ and all
edges to exactly one cut-component.  This proves (4).
\qed
\end{proof}

\begin{lemma}
Every multi-triangulated graph has at least one edge that is
not a multiple edge.
\end{lemma}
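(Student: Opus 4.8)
Prove that every multi-triangulated graph $G$ has at least one edge that is not a multi-edge.

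The plan is to argue by contradiction: suppose every edge of $G$ is a multi-edge, and derive a combinatorial impossibility from the correspondence between multi-edges and cutting pairs established in Lemma~\ref{lem:mtriangulation}. By part~(3) of that lemma, each multi-edge $(u,v)$ means $\{u,v\}$ is a cutting pair whose number of cut-components equals the multiplicity; in particular every edge lies on a cutting pair, so $G$ would have no edge whose endpoints are a ``safe'' pair. The intuition is that a graph in which \emph{every} edge separates the graph cannot be finite and triangulated, because triangulation forces plenty of faces, yet every edge being a cutting pair forces a rigidly recursive structure that must bottom out somewhere in a genuine triangle with a non-separating edge.

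First I would make this precise by a size-descent argument on cut-components. Fix any multi-edge $(u,v)$ and any one of its cut-components $C$; the interior $\int(C)=C-\{u,v\}$ together with $u,v$ and the two bounding copies of $(u,v)$ forms a smaller multi-triangulated region. If every edge of $G$ is a multi-edge, then in particular there is some edge interior to $C$ (since $C$ must contain a triangle, as all faces are triangles and $\ell\geq 2$ by part~(3)), and that edge is again a cutting pair of $G$ with strictly fewer cut-components, or strictly fewer vertices strictly inside, than $\{u,v\}$. Iterating, I would choose at each step a cutting pair that is \emph{innermost}, i.e., one minimizing the number of vertices in some cut-component interior; such a minimum exists because $G$ is finite. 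The hard part will be defining the right monotone quantity so that the descent is guaranteed to strictly decrease and cannot cycle.

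At the innermost cutting pair $\{u,v\}$, I would examine one of its cut-components $C$ of smallest interior. By part~(4) of Lemma~\ref{lem:mtriangulation}, the edges from $v$ into $\int(C)$ appear consecutively between two copies of $(u,v)$, and since all faces are triangles, the first such edge $(v,b)$ together with $(u,v)$ closes off a triangular face $\{u,v,b\}$. The edge $(u,b)$ (or $(v,b)$) on this minimal cut-component cannot itself be a cutting pair with a nonempty interior cut-component, by minimality of $\int(C)$; I would argue that this edge therefore has both its incident faces as triangles sharing no separating structure, so it is \emph{not} a multi-edge, contradicting the assumption that every edge is a multi-edge. I expect the main obstacle to be the bookkeeping that confirms this bottom-level edge genuinely fails to be a cutting pair: one must rule out that $b$ itself sits on a smaller cutting pair hidden inside $\int(C)$, which is exactly what the minimality of the chosen cut-component is designed to prevent, but spelling out the planarity argument that the two triangles at the edge leave no room for a separator is where the care is needed.
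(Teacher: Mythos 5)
Your approach is correct, and at heart it is the same minimality/descent argument the paper uses, just run through different machinery. The paper works directly with the drawing: two copies of a multi-edge bound a closed curve that must enclose at least one vertex (otherwise there would be a face with only two edges); one picks the pair of copies enclosing the fewest vertices, and then \emph{any} edge incident to a vertex strictly inside that curve is simple, since otherwise its two copies would bound a curve enclosing strictly fewer vertices. You instead route through the multi-edge/cutting-pair correspondence of Lemma~\ref{lem:mtriangulation} and minimize $|\int(C)|$ over all cut-components $C$ of all cutting pairs; the two viewpoints coincide because, by parts (3)--(4) of that lemma, the region between two consecutive copies of $(u,v)$ contains exactly one cut-component. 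The step you flag as the main obstacle does close, and more cleanly than you fear: let $b$ be the first neighbour of $v$ in $\int(C)$, so that $\{u,v,b\}$ bounds a triangular face. If $(v,b)$ were a multi-edge, then $\{v,b\}$ would be a cutting pair, and every cut-component of $\{v,b\}$ not containing $u$ has its interior contained in $\int(C)\setminus\{b\}$: any vertex outside $\int(C)\cup\{u,v\}$ lies in the interior of some other cut-component of $\{u,v\}$, which is connected, avoids $v$ and $b$, and has an edge to $u$ by 2-connectivity, so it stays attached to $u$ after deleting $\{v,b\}$. Hence $\{v,b\}$ would have a cut-component with interior strictly smaller than $\int(C)$, contradicting minimality; no further planarity bookkeeping about the two triangles at the edge is needed. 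The only real trade-off is that the paper's version is self-contained (it needs no structural lemma and no contradiction framing, directly exhibiting a simple edge at any vertex inside the minimal bigon), whereas yours leans on Lemma~\ref{lem:mtriangulation}, which the paper proves anyway.
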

\begin{proof}
Fix one arbitrary planar drawing $\Gamma$ of $G$ for which all facial
circuits have three edges.  Nothing is to show if $G$ is
simple, so assume $G$ has multi-edges.  If $e_1,e_2$ are
two copies of a multi-edge, then their drawing defines a
closed curve $C$.  This curve cannot be the boundary of a
face since facial circuits have three edges.  In consequence,
at least one vertex must be inside any closed curve defined
by two copies of a multi-edge.

Assume that $e_1,e_2$ has been chosen such that their closed
curve encloses the minimum possible number of vertices among
all such pairs.  Let $v$ be a vertex inside that curve, and
let $e$ be an edge incident to $v$.  Then $e$ must be simple
by choice of $e_1,e_2$.
\qed
\end{proof}

\subsection{Finding paths for child-peripheral pairs}

This section gives the proof of Lemma~\ref{lem:paths}, which
states the following:
\begin{quotation}
Let $\calB$ be the set of all child-peripheral pairs in $G$.
There exists a set of vertex-disjoint paths $P_1,\dots,P_{|\calB|}$ in $G$ 
such that for any child-peripheral-pair $\{b^\ell,b^r\}$ in $\calB$, one of the
paths connects $b^\ell$ with $b^r$.
\end{quotation}
Consider any child-peripheral-pair $\{b_i^\ell,b_i^r\}$,
say at cut-component $C_i$ of cutting pair $\{u,v\}$.  As argued
in the main part of the paper, then $\{u,v,b_i^\ell,b_i^r\}$ must all
belong to one triconnected component, call it $D$.  Since $D$ is
3-connected, there must exist a path $P$ from $b_i^\ell$ to $b_i^r$
within $D-\{u,v\}$, and this is the path that we use for this
child-peripheral pair.

It remains to argue that these paths are disjoint.  
Let $\{b',b''\}$ be some other child-peripheral-pair, say at cutting pair
$\{u',v'\}$, such that $\{b',b'',u',v'\}$ belong to triconnected 
component $D'$ and we assigned a path $P'$ in $D'-\{u',v'\}$ to this 
child-peripheral pair.

Recall that cutting pair $\{u,v\}$ splits the graph into multiple
cut-components.  One of those is $C_i$, the child-component that
contained $b_i^\ell$ and $b_i^r$ and therefore also the triconnected
component $D$ and the path $P$.  We now distinguish cases
depending on which of the cut-components contains $D'$:
\begin{itemize}
\item $D'$ is part of a child-component of $\{u,v\}$ other than $C_i$.

	We know that child-cut-components   are vertex-disjoint
	except for $\{u,v\}$.  
	Therefore $D$ and $D'$ are vertex-disjoint except for perhaps $\{u,v\}$.
	Hence $P$ and $P'$ are vertex-disjoint.
\item $D'$ is part of the parent-component of $\{u,v\}$. 

	As before, since cut-components are vertex-disjoint except for
	$\{u,v\}$, this implies that $P$ and $P'$ are vertex-disjoint.

\item $D'$ is part of the child-component $C_j$ of $\{u,v\}$. 

	This implies that $\{u,v\}\neq \{u',v'\}$, since for each
	cutting pair, each cut-component gets only one peripheral pair.
	($u=u'$ or $v=v'$ is possible, but not both.)  Changing the
	point of view, now consider the cut-components of $\{u',v'\}$.
	Here $D'$ belongs to a child-component (because $\{b',b''\}$
	is a child-peripheral-pair), but $D$ belongs to the parent-component
	(since $D'$ belongs to a child-component of $\{u,v\}$).
	Exchanging the roles of the two cutting pairs hence shows 
	as in the
	previous case that $P$ and $P'$ are vertex-disjoint. \qed
\end{itemize}

\subsection{Making graphs connected} 

In this section, we give a proof of Lemma~\ref{lem:make_connected},
which states:

\begin{quotation}
Let $G$ be a planar graph. 
Then we can add edges to $G$ so that
the resulting graph $G'$ is planar, connected, and $pw(G')=\max\{1,pw(G)\}$.
\end{quotation}

Let $C_1,\dots,C_L$ be the connected components of $G$.  Each
of them has pathwidth at most $pw(G)$ since they are subgraphs of $G$;
let $\calP_i$ be a path decomposition of $C_i$ of width at most $pw(G)$.
Start with path decomposition $\calP_1$.  Append one new bag, into
which we insert one arbitrary vertex $v_1$ from the last bag of $\calP_1$
and one arbitrary vertex $u_2$ from the first bag of $\calP_2$.  Then 
append $\calP_2$.  Repeat with the remaining components: insert
a new bag after the last bag of $\calP_i$, give it one vertex $v_i$
from the last bag of $\calP_i$ and one vertex $u_{i+1}$ from
the first bag of $\calP_{i+1}$, and then append $\calP_{i+1}$.
Clearly we get a path decomposition $\calP$ of $G$ of width $\max\{1,pw(G)\}$.  

Define $G'$ to be the graph obtained by adding $(u_i,v_{i+1})$ to $G$,
for $i=1,\dots,L-1$.  Clearly $\calP$ is also a path decomposition
of $G'$, since we created bags for each of these new edges.  Also $G'$
is planar since adding an edge between two vertices in different
connected components cannot destroy planarity.   This shows the result.
\qed

\end{appendix}

\end{document}